\renewcommand{\paragraph}[1]{\smallskip\noindent\textbf{\emph{#1}}}
\begin{document}

\title{Equivalence and Similarity Refutation for Probabilistic Programs}

\newcommand{\PP}{{\sc PP}}
\newcommand{\support}{\mathit{supp}}
\newcommand{\dom}{\mathit{dom}}
\newcommand{\Vout}{V_{\mathit{out}}}
\newcommand{\pCFG}{\mathcal{C}}
\newcommand{\locinit}{\loc_{\mathit{init}}}
\newcommand{\locterm}{\loc_{\mathit{out}}}
\newcommand{\locfail}{\loc_{\mathit{fail}}}
\newcommand{\vecinit}{\mathbf{x}_{\mathit{init}}}
\newcommand{\lem}{\eta}
\newcommand{\preexp}[1]{\mathit{pre}_{#1}}
\newcommand{\varinit}{\Theta_{\mathit{init}}}
\newcommand{\updates}{\mathit{Up}}
\newcommand{\Fpath}{\mathit{Fpath}}
\newcommand{\Reach}{\mathit{Reach}}
\newcommand{\State}{\mathit{State}}
\newcommand{\Run}{\mathit{Run}}
\newcommand{\RunObs}{\mathit{Run}_{\texttt{observe}}}
\newcommand{\RunFail}{\mathit{Run}_{\texttt{fail}}}
\newcommand{\transitions}{\mapsto}
\newcommand{\probdist}{\mathit{Pr}}
\newcommand{\guards}{G}
\newcommand{\prob}{\mathit{Pr}}
\newcommand{\Inf}{\mathit{Inf}}
\newcommand{\Unif}{\mathit{Uniform}}
\newcommand{\Normal}{\mathit{Normal}}
\newcommand{\predicate}{\Psi}
\newcommand{\pvars}{V}
\newcommand{\rvars}{R}
\newcommand{\locs}{\mathit{L}}
\newcommand{\loc}{\ell}
\newcommand{\tran}{\tau}
\newcommand{\probm}{\mathbb{P}}
\newcommand{\E}{\mathbb{E}}
\newcommand{\expr}{E}
\newcommand{\val}{\mathbf{x}}
\newcommand{\todoP}[1]{\textbf{\textcolor{red}{TODO} P: #1}}
\newcommand{\todoDj}[1]{\textbf{\textcolor{blue}{TODO} Dj: #1}}
\newcommand{\Rset}{\mathbb{R}}
\newcommand{\sigmaAlg}{\mathcal{F}}
\newcommand{\TimeTerm}{\mathit{TimeTerm}}
\newcommand{\Termset}{\mathit{Term}}
\newcommand{\Output}{\mathit{Output}}
\newcommand{\TV}{\mathit{TV}}
\newcommand{\K}{\mathcal{K}}
\newcommand{\out}{\mathit{out}}
\newcommand{\Rsetnn}{\Rset_{\geq 0}}
\newcommand{\probmalt}{\mu}
\newcommand{\veca}[1]{\mathbf{#1}}
\newcommand{\nextv}{\mathit{Next}}
\newcommand{\stime}{T}
\newcommand{\indicator}[1]{\mathbb{I}_{#1}}
\newcommand{\Nset}{\mathbb{N}}
\newcommand{\Mono}{\textit{Mono}}
\newcommand{\pcfg}{\pCFG}

\newtheorem{problem}{Problem}
\newtheorem{remark}{Remark}

\author{Krishnendu Chatterjee}
\email{krishnendu.chatterjee@ist.ac.at}
\orcid{0000-0002-4561-241X}
\affiliation{
    \institution{Institute of Science and Technology Austria (ISTA)}
    \city{Klosterneuburg}
    \country{Austria}
}

\author{Ehsan Kafshdar Goharshady}
\email{ehsan.goharshady@ist.ac.at}
\orcid{0000-0002-8595-0587}
\affiliation{
    \institution{Institute of Science and Technology Austria (ISTA)}
    \city{Klosterneuburg}
    \country{Austria}
}

\author{Petr Novotný}
\email{petr.novotny@fi.muni.cz}
\orcid{0000-0002-5026-4392}
\affiliation{
    \institution{Masaryk University}
    \city{Brno}
    \country{Czech Republic}
}

\author{\DJ or\dj e \v{Z}ikeli\'c}\authornote{Part of the work done while the author was at the Institute of Science and Technology Austria (ISTA).}
\email{dzikelic@smu.edu.sg}
\orcid{0000-0002-4681-1699}
\affiliation{
    \institution{Singapore Management University}
    \country{Singapore}
}


\begin{abstract}
	We consider the problems of statically refuting equivalence and similarity of output distributions defined by a pair of probabilistic programs. Equivalence and similarity are two fundamental relational properties of probabilistic programs that are essential for their correctness both in implementation and in compilation. In this work, we present a new method for static equivalence and similarity refutation. Our method refutes equivalence and similarity by computing a function over program outputs whose expected value with respect to the output distributions of two programs is different. The function is computed simultaneously with an upper expectation supermartingale and a lower expectation submartingale for the two programs, which we show to together provide a formal certificate for refuting equivalence and similarity. To the best of our knowledge, our method is the first approach to relational program analysis to offer the combination of the following desirable features: (1)~it is fully automated, (2)~it is applicable to infinite-state probabilistic programs, and (3)~it provides formal guarantees on the correctness of its results. We implement a prototype of our method and our experiments demonstrate the effectiveness of our method to refute equivalence and similarity for a number of examples collected from the literature.
\end{abstract}

\begin{CCSXML}
	<ccs2012>
	<concept>
	<concept_id>10003752.10010124.10010138.10010142</concept_id>
	<concept_desc>Theory of computation~Program verification</concept_desc>
	<concept_significance>500</concept_significance>
	</concept>
	<concept>
	<concept_id>10003752.10010124.10010138.10010143</concept_id>
	<concept_desc>Theory of computation~Program analysis</concept_desc>
	<concept_significance>500</concept_significance>
	</concept>
	<concept>
	<concept_id>10011007.10011074.10011099.10011692</concept_id>
	<concept_desc>Software and its engineering~Formal software verification</concept_desc>
	<concept_significance>500</concept_significance>
	</concept>
	<concept>
	<concept_id>10002950.10003648</concept_id>
	<concept_desc>Mathematics of computing~Probability and statistics</concept_desc>
	<concept_significance>500</concept_significance>
	</concept>
	</ccs2012>
\end{CCSXML}

\ccsdesc[500]{Theory of computation~Program verification}
\ccsdesc[500]{Theory of computation~Program analysis}
\ccsdesc[500]{Software and its engineering~Formal software verification}
\ccsdesc[500]{Mathematics of computing~Probability and statistics}



\maketitle

\section{Introduction}\label{sec:intro}

\noindent{\bf\em Probabilistic programs.} Probabilistic programs are imperative or functional programs extended with the ability to perform sampling from probability distributions and to condition data on observations~\cite{GordonHNR14,MeentPYW18,barthe2020foundations}. They provide an expressive framework for specifying probabilistic models and have been adopted in a range of application domains including stochastic networks~\cite{FosterKMR016}, machine learning~\cite{Ghahramani15}, security~\cite{BartheGGHS16,BartheGHP16} and robotics~\cite{Thrun00}. Instead of designing different inference and analysis techniques for probabilistic models that may arise in each of these domains, one can first specify the probabilistic model of interest as a probabilistic program and then utilize the existing techniques for probabilistic programs. This separation of model specification on one hand and inference and analysis on the other hand has sparked interest in the probabilistic programming paradigm, and recent years have seen the development of many probabilistic programming languages, e.g.~Church~\cite{GoodmanMRBT08}, Pyro~\cite{BinghamCJOPKSSH19} or Edward~\cite{tran2017deep}. Concurrently with studying the design and implementation of probabilistic programming languages, formal analysis of probabilistic programs has also become a very active research area.

\smallskip\noindent{\bf\em Static analysis of probabilistic programs.} Probabilistic programs are hard to reason about. While deterministic programs always produce the same output on a given input, probabilistic programs give rise to {\em output distributions}. This makes probabilistic programs extremely hard to analyze both in theory~\cite{KaminskiKM19} and in practice~\cite{NandiGSMM17,DuttaLHM18}, as bugs in probabilistic program implementation may be very subtle and hard to detect.

Recent years have seen much work on static analysis of probabilistic programs, where the aim is to formally prove temporal or input/output properties by analyzing the source code directly and instead of repeatedly sampling randomized executions of probabilistic programs. There have been significant developments on static analysis with respect to termination~\cite{ChakarovS13,ChatterjeeFNH18,KaminskiKMO18,McIverMKK18,ChatterjeeGMZ22,AgrawalC018,ChatterjeeGNZZ21}, reachability~\cite{TakisakaOUH21}, safety~\cite{ChatterjeeNZ17,SankaranarayananCG13,BeutnerOZ22,BatzCJKKM23,BaoTPHR22}, cost~\cite{NgoC018,Wang0GCQS19,Wang0R21}, input/output~\cite{ChenKKW22}, runtime~\cite{DBLP:conf/tacas/KuraUH19}, productivity for infinite streams~\cite{0001BH018}, sensitivity~\cite{BartheEGHS18,WangFCDX20,0001BHKKM21} or differential privacy~\cite{AlbarghouthiH18} properties. 

\smallskip\noindent{\bf\em Equivalence and similarity refutation.} In this work, we focus on \emph{relational analysis} of probabilistic programs. The goal of relational analysis is to prove properties of \emph{pairs} of probabilistic programs. A prominent example of relational property is \emph{equivalence}: two probabilistic programs are equivalent if they define the same output distributions. In this paper, we consider static analysis of equivalence and similarity of output distributions of probabilistic program pairs. Equivalence and similarity are two fundamental properties of probabilistic programming systems that are essential for their correctness {\em both in implementation and in compilation}. We study the following two problems:
\begin{compactenum}
	\item {\em Equivalence refutation problem.} Given a pair of probabilistic programs, prove that their output distributions are not equivalent (a notion formally defined in Section~\ref{sec:problem}).
	\item {\em Similarity refutation problem.} Given a pair of probabilistic programs, prove a lower bound on Kantorovich distance~\cite{villani2021topics} between their output distributions (we formally define Kantorovich distance and discuss its relation to other distances in Sections~\ref{sec:distances} and~\ref{sec:problem}).
\end{compactenum}

\smallskip\noindent{\bf\em Relevance.} Equivalence checking and refutation are crucial for ensuring probabilistic program correctness or for bug detection. For instance, if we have two different implementations of a probabilistic model or two randomized algorithms designed to solve a given problem, the equivalence refutation analysis allows us to detect whether the two probabilistic programs give rise to different output distributions~\cite{MurawskiO05}. Such an analysis allows, e.g., bug detection in samplers from probability distributions~\cite{ChakrabortyM19} or in implementations of cryptographic protocols~\cite{BartheGB09}. Equivalence refutation analysis also allows bug detection in probabilistic program compilers. For instance, it was observed by~\cite{0001ZHM19} that a 10-line probabilistic program in Stan~\cite{gelman2015stan} executes over 6000 lines of code of Stan implementation. Hence, detecting compilation bugs by testing may be a challenging task even for small programs. Static equivalence refutation analysis allows bug detection in compilation by comparing the source code to its intermediate representation without program execution.

While equivalence refutation analysis only proves that output distributions of two programs are not equivalent, similarity analysis provides more fine-grained information and {\em quantifies the difference} between two output distributions (e.g., the difference between the output distribution induced by a sampler and the ground probability distribution whose samples we wish to generate~\cite{ChakrabortyM19}).

\smallskip\noindent{\bf\em Prior work.} 
Equivalence and similarity are {\em relational properties} that are defined with respect to a program {\em pair}. The prior work on relational reasoning about probabilistic programs focused on developing logical systems for such reasoning~\cite{culpepper2017contextual} rather than on automation; or on sensitivity analysis~\cite{BartheEGHS18,0001BHKKM21,WangFCDX20,HuangWM18,AlbarghouthiH18,BartheGGHS16}, whose aims and assumptions differ from equivalence analysis (see Section~\ref{sec:relatedwork} for detailed discussion). \emph{Automated} methods for formal analysis of probabilistic program equivalence have been developed for \emph{finite-state} probabilistic systems~\cite{MurawskiO05,KieferMOWW11,BartheJK22}. However, probabilistic programs defined over real or integer-valued variables or containing sampling from continuous probability distributions (such as normal or uniform) all give rise to infinite-state programs. 

On the other hand, there is a huge body of work on sampling-based statistical testing of equivalence and similarity of two probability distributions~\cite{BatuFRSW13,ChanDVV14}, see~\cite{canonne2020survey} for a survey. While these methods provide extremely useful information and do not impose syntactic restrictions on probability distributions that they can analyze, they suffer from two key limitations. First, guarantees on the correctness of their analyses are {\em statistical}, meaning that there is always a non-zero probability that the analysis results are incorrect. Second, sampling-based methods suffer from scalability issues if the probabilistic program needs to be executed for a long time. For instance, the two programs in Figure~\ref{fig:motivatingtwo} both consist of 7 lines of code; however, each execution of either of the two programs requires millions of samples from uniform distribution.
Static analysis methods would be much more appropriate for analyzing equivalence or similarity of such programs.

To the best of our knowledge, no prior work has proposed an {\em automated} method for equivalence and similarity refutation analyses in {\em infinite-state probabilistic programs} that provide {\em formal guarantees} on the correctness of their results.

\smallskip\noindent{\bf\em Our approach -- automated formal analysis via expectation martingales.} We present a new method for static equivalence and similarity refutation analyses of probabilistic program pairs. To the best of our knowledge, we present the first method that provides the following desired features:
\begin{compactenum}
	\item {\em Automation.} Our method is fully automated.
	\item {\em Infinite-state programs.} Our method is applicable to infinite-state probabilistic programs.
	\item {\em Formal guarantees.} Our method provides formal guarantees on the correctness of its results.
\end{compactenum}
\smallskip\noindent{\bf\em Technical challenges.} Given two programs, our method refutes their equivalence by computing a function \( f \) over their output variables such that the expected value of \( f \) at the output of the two programs differs. Our method searches for such a function by computing it simultaneously with an {\em upper expectation supermartingale (UESM)} for the first program and a {\em lower expectation submartingale (LESM)} for the second program. UESMs and LESMs, notions similar to cost supermartingales~\cite{Wang0GCQS19} or super- and sub-invariants~\cite{HarkKGK20} (see Remark~\ref{rmk:comparison} for a comparison), provide sound proof rules for deriving upper and lower bounds on the expected value of a function on program output in probabilistic programs. We show that UESMs and LESMs together with the function $f$ over outputs provide sound proof rules for refuting equivalence and similarity of programs. To the best of our knowledge, no martingale-based approach has been used in prior work for static analysis of {\em relational} properties of probabilistic program pairs. 
The non-trivial challenge is to simultaneously compute the function $f$, along with two martingales (one submartingale and other supermatingale), which we achieve via a constraint solving-based approach.


\smallskip\noindent{\bf\em Contributions.} Our contributions can be summarized as follows:
\begin{compactenum}
	\item To our best knowledge, we present the first method for {\em static equivalence and similarity refutation} of probabilistic program pairs, which is {\em automated}, applicable to {\em infinite-state} probabilistic programs and provides {\em formal guarantees} on the correctness of its results.
	
	
	\item We formulate {\em sound proof rules for equivalence and similarity refutation} via UESMs and~LESMs.
	
	\item We present {\em fully automated algorithms} for equivalence and similarity refutation analyses in probabilistic programs, based on the above proof rules. 
	The algorithms simultaneously compute a UESM and an LESM for two probabilistic programs together with a function over their output variables. 
	They are applicable to numerical probabilistic programs with polynomial arithmetic expressions that may contain sampling instructions from both discrete and continuous probability distributions.
    Moreover, our method and our algorithm for similarity refutation are also applicable to other distance metrics, such as Total Variation~(TV), which can be reduced to the Kantorovich distance (see Section~\ref{sec:distances} for details).
	
	\item Our {\em experimental evaluation} demonstrates the ability of our method to refute equivalence and compute lower bounds on Kantorovich distance for a variety of program pairs.
\end{compactenum}



\begin{figure}[t]
\centering
\begin{subfigure}{0.47\textwidth}
\begin{lstlisting}[frame=none,numbers=none,escapechar=@,mathescape=true]
   $\,\texttt{sent} = 0,\, \texttt{fail} = 0$
$\locinit$:while $\texttt{sent} \leq 8\,000\,000$ and $\texttt{fail} \leq 0$:
$\loc_1$:      if prob($0.999$):
$\loc_2$:          $\texttt{sent} = \texttt{sent} + 1$
$\loc_3$:      else:
$\loc_4$:          $\texttt{fail} = 1$
$\locterm$:return $\texttt{sent}$
\end{lstlisting}
\end{subfigure}
\hfill
\begin{subfigure}{0.49\textwidth}
\begin{lstlisting}[frame=none,numbers=none,escapechar=@,mathescape=true]
   $\,\texttt{sent} = 0,\, \texttt{fail} = 0$
$\locinit$:while $\texttt{sent} \leq 9\,000\,000$ and $\texttt{fail} \leq 0$:
$\loc_1$:      if prob($0.9995$):
$\loc_2$:          $\texttt{sent} = \texttt{sent} + 1$
$\loc_3$:      else:
$\loc_4$:          $\texttt{fail} = 1$
$\locterm$:return $\texttt{sent}$
\end{lstlisting}
\end{subfigure}
\vspace{-0.5em}
\caption{Transmission protocol example.}
\vspace{-1em}
\label{fig:motivatingtwo}
\end{figure}

\section{Overview}\label{sec:overview}

We start by presenting an overview of our approach and illustrating it on the probabilistic program pair in Figure~\ref{fig:motivatingtwo}. We first overview our method for solving the equivalence refutation problem, and then show how our method can be extended to also solve the similarity refutation problem. We provide two more motivating examples for the equivalence and the similarity refutation problems in Section~\ref{sec:further-motive} in the Supplementary material. 

\begin{example}[Simple programs with long execution times]
	Consider the probabilistic program pair in Figure~\ref{fig:motivatingtwo}. Each program models a simplified network protocol~\cite{HelminkSV93,BatzCJKKM23} which aims to transmit $\texttt{n}$ packets from the receiver to the sender. However, each packet may be lost with probability $\texttt{p}$, and the transmission stops whenever some packet is lost. For the program in Figure~\ref{fig:motivatingtwo} left, we have $\texttt{n} = 8\,000\,000$ and $\texttt{p} = 0.001$ as in~\cite{BatzCJKKM23}. On the other hand, the protocol in Figure~\ref{fig:motivatingtwo} right transmits $\texttt{n} = 9\,000\,000$ packets with loss probability $\texttt{p} = 0.0005$. Both programs output the number $\texttt{sent}$ of successfully transmitted packets, hence the output distribution of each program is the probability distribution of the value of $\texttt{sent}$ upon termination.
	
	One easily sees that these two programs do not define equivalent output distributions. However, using sampling-based statistical testing to deduce this would be extremely inefficient. Indeed, sampling a single execution of either program requires $\texttt{n} = 8\,000\,000$  or $\texttt{n} = 9\,000\,000$ samples from Bernoulli distribution, respectively. A static analysis approach that does not need to sample program executions would be much more appropriate for refuting equivalence in this example.
\end{example}

\noindent{\bf\em Requirements.} In the sequel, we consider a pair of probabilistic programs and assume that they satisfy the following requirements. First, we assume that the programs share a common set of {\em output variables $\Vout$}. This is necessary for the output distributions to be defined over the same space so that they can be compared. Second, we assume that both programs are {\em almost-surely terminating}, so that their output distributions are indeed probability distributions.

\smallskip\noindent{\bf\em Equivalence refutation.} Let $\mathbb{E}_{\mu_1}$ and $\mathbb{E}_{\mu_2}$ denote the expectation operators over output distributions defined by two probabilistic programs. Our method refutes equivalence by searching for a function $f: \mathbb{R}^{|\Vout|} \rightarrow \mathbb{R}$ that maps program outputs to real numbers, whose expected values over two output distributions are not equal, i.e.~$\mathbb{E}_{\mu_1}[f] \neq \mathbb{E}_{\mu_2}[f]$.

To find such a function $f$, our method simultaneously searches for an {\em upper expectation supermartingale (UESM)} for the first program and a {\em lower expectation submartingale (LESM)} for the second program, notions that we formally define in Section~\ref{sec:expectationsupermartingales}. 
For a probabilistic program and a function $f$ over its outputs, a UESM for $f$ (resp.~LESM for $f$) provides a sound proof rule for deriving an upper bound (resp.~lower bound) of the expected value of $f$ on program output. Hence, in order to refute equivalence, our method searches for
\begin{compactenum}
	\item a function $f: \mathbb{R}^{|\Vout|} \rightarrow \mathbb{R}$ over program outputs,
	\item an UESM for $f$ in the first program, and
	\item an LESM for $f$ in the second program,
\end{compactenum}
such that the upper bound on $\mathbb{E}_{\mu_1}[f]$ implied by the UESM is strictly smaller than the lower bound on $\mathbb{E}_{\mu_2}[f]$ by by the LESM. In Section~\ref{sec:expectationsupermartingales}, we show that these three objects together formally certify that $\mathbb{E}_{\mu_1}[f] \neq \mathbb{E}_{\mu_2}[f]$ and thus that the output distributions of two programs are not equivalent.

Note that searching for a function $f$ over outputs whose expectation differs in the two programs yields {\em both sound and complete} proof rule for refuting equivalence of output distributions. Indeed, we will prove soundness in Section~\ref{sec:expectationsupermartingales} as stated above. On the other hand, for completeness, suppose that two output distributions are not equivalent. Then, there exists an event $A$ over outputs such that $P_{\mu_1}[A]\neq P_{\mu_2}[A]$. Hence, with $f$ being the indicator function $I(A)$, we have $E_{\mu_1}[I(A)]\neq E_{\mu_2}[I(A)]$.

\smallskip\noindent{\bf\em Upper and lower expectation martingales.} Consider a probabilistic program and a function $f$ over its outputs. Intuitively, an {\em upper expectation supermartingale (UESM)} for $f$ is a function $U_f$ that assigns a real value to each program state (comprising of a location in the code and program variable values), which is required to satisfy the following two conditions:
\begin{compactenum}
	\item {\bf Zero on output} The function $U_f$ is equal to zero on termination, i.e.~$U_f(s)=0$ for every reachable terminal state $s$ in the program.
	\item {\bf Expected $f$-decrease} In every step of program computation, an increase in the value of $f$ is matched in expectation by the decrease in the value of $U_f$. That is, for every reachable state $s$ in the program, we have $U_f(s) - \mathbb{E}[U_f(s')] \geq \mathbb{E}[f((s')^{\text{out}})] - f(s^{\text{out}})$.
\end{compactenum}
Here, we use the standard primed notation from program analysis: $s'$ denotes the probabilistically chosen successor of the state $s$ upon one computational step of the program. Also, $s^{\text{out}}$ and $(s')^{\text{out}}$ denote the output variable valuations defined by states $s$ and $s'$.

The expected \(f\)-decrease condition can be rewritten as $U_f(s) \geq \mathbb{E}[U_f(s')] + \mathbb{E}[f((s')^{\text{out}})] - f(s^{\text{out}})$.
Intuitively, \(U_f(s)\) is an upper bound on the expected difference between the value of \(f\) \emph{in the current state} (which is \(f(s^{\text{out}})\)) and upon termination (which is a random variable over the output distribution of paths starting from  \(s\)).

Lower expectation submartingales (LESMs) are defined similarly, with the only difference being that the expected $f$-decrease is replaced by the dual \( f \)-increase (by replacing \( \geq \) with \( \leq \)). 

We formally define UESMs and LESMs in Section~\ref{sec:expectationsupermartingales}. Furthermore, we prove that a UESM in the initial state of the program evaluates to an {\em upper bound} on the difference between the expected value of $f$ on output and the value of $f$ in the initial program state (subject to at least one of the so-called ``Optional Stopping Theorem'' conditions being satisfied, see Section~\ref{sec:expectationsupermartingales} for details); and dually for LESMs. Hence, U/LESMs provide a sound proof rule for computing upper/lower bounds on the expected value of a function defined over program outputs.
%
%
The names of UESMs and LESMs emphasize their connection to supermartingale and submartingale processes in probability theory, respectively~\cite{Williams91}, which lie at the core of soundness proofs of our proof rules. Intuitively, supermartingales (resp.~submartingales) are a class of stochastic processes that decrease (resp.~increase) in expected value upon every one-step evolution of the process. In particular, in the case of UESMs, we see from the above definition that the sum of $U_f$ and $f$ intuitively behaves like a supermartingale, and similarly for LESMs and submartingales. 

\begin{example}\label{ex:ulesm}
Consider the programs shown in Figure~\ref{fig:motivatingtwo} with output variables $\texttt{sent}$ and $\texttt{fail}$. Define the function $f(\texttt{sent},\texttt{fail}) = \texttt{sent} - \text{fail}$ over the outputs of programs. Furthermore, define the functions $U_f$ mapping states in the left program to reals and $L_f$ mapping states in the right program to reals via (as computed by our tool in Section~\ref{sec:exper}, rounded to one decimal)
\begin{equation*}
	U_f \begin{pmatrix}
		\loc,\\
		\texttt{sent},\\
		\texttt{fail}
	\end{pmatrix}
	= \begin{cases}
		998 - 998 \cdot \texttt{fail}, &\text{if } \loc = \locinit \\
		998 - 997 \cdot \texttt{fail}, &\text{if } \loc = \loc_1 \\
		999 - 998 \cdot \texttt{fail}, &\text{if } \loc = \loc_2 \\
		-1 +\texttt{fail}, &\text{if } \loc = \loc_3 \\
		-1 + \texttt{fail} , &\text{if } \loc = \loc_4 \\
		0, &\text{if } \loc = \locterm \\
	\end{cases}
	\quad
	L_f\begin{pmatrix}
		\loc,\\
		\texttt{sent},\\
		\texttt{fail}
	\end{pmatrix}
	= \begin{cases}
		1997.5 - 1997.5 \cdot \texttt{fail}, &\text{if } \loc = \locinit \\
		1997.5 - 1996.5\cdot\texttt{fail}, &\text{if } \loc = \loc_1 \\
		1998.5 - 1997.5 \cdot\texttt{fail}, &\text{if } \loc = \loc_2 \\
		-1 + \texttt{fail}, &\text{if } \loc = \loc_3 \\
		-1 + \texttt{fail}, &\text{if } \loc = \loc_4 \\
		0, &\text{if } \loc = \locterm \\
	\end{cases}
\end{equation*}
Since both functions are equal to $0$ at all reachable output states, it follows that they both satisfy the Zero on output condition. Furthermore, one can check by inspection that $U_f$ satisfies the Expected $f$-decrease condition in the program on the left, and that $L_f$ satisfies the Expected $f$-increase condition in the program on the right. Hence, $U_f$ is an example of an UESM for $f$ in the program in the left, and $L_f$ is an example of an LESM for $f$ in the program in the right.
\end{example}

\noindent{\bf\em UESMs and LESMs for equivalence refutation.} To refute equivalence of two probabilistic programs, our method computes (1)~a function $f$ over probabilistic program outputs, (2)~an UESM $U^1_f$ for $f$ in the first program, and (3)~an LESM $L^2_f$ for $f$ in the second program, such that
\[ U_f(s^1_{\text{init}}) + f((s^1_{\text{init}})^{\text{out}}) < L_f(s^2_{\text{init}}) + f((s^2_{\text{init}})^{\text{out}}), \] 
where $s^1_{\text{init}}$ and $s^2_{\text{init}}$ are the initial states of the first and the second program, respectively. Note that the choice of computing UESMs for the first program and LESMs for the second program rather than the opposite is made without loss of generality. Indeed, by simply negating the function $f$, an UESM for $f$ becomes an LESM for $-f$ and vice-versa. We formalize our proof rule for the equivalence refutation problem and prove its soundness in Section~\ref{sec:proofequivalencesimilarity}.

\begin{example}\label{ex:uelsmequiv}
	Consider again the programs in Figure~\ref{fig:motivatingtwo} and the function $f$, 
	the UESM $U_f$, 
	and the LESM $L_f$ 
	defined in Example~\ref{ex:ulesm}. The initial state of the programs satisfies $\texttt{sent} = \texttt{fail} = 0$. Hence,
	\[ U_f(s^1_{\text{init}}) + f((s^1_{\text{init}})^{\text{out}}) = 998 < 1997.5 = L_f(s^2_{\text{init}}) + f((s^2_{\text{init}})^{\text{out}}). \]
	Hence, our method refutes equivalence of output distributions of programs in Figure~\ref{fig:motivatingtwo}.
\end{example}

\noindent{\bf\em Automation: Simultaneous synthesis.} The key challenge in automating the aforementioned idea is the effective computation of the function over outputs, the UESM and the LESM. Note that these objects cannot be computed separately -- the computation must be guided by the objective of obtaining $f$, $U^1_f$ and $L^2_f$ such that $U_f(s^1_{\text{init}}) + f((s^1_{\text{init}})^{\text{out}}) < L_f(s^2_{\text{init}}) + f((s^2_{\text{init}})^{\text{out}})$.

We solve this challenge by employing a constraint-solving-based approach to compute these three objects {\em simultaneously}. While our theoretical results apply to the general arithmetic probabilistic programs, our automated method is applicable to probabilistic program pairs in which all arithmetic expressions are polynomials over program variables. It first fixes a polynomial template for $f$ by fixing a symbolic polynomial expression over output variables $\Vout$. It also fixes polynomial templates for the UESM in the first program and the LESM in the second program by fixing one symbolic polynomial expression over program variables at each location of each program. The defining conditions of the UESM
and the LESM 
are then encoded as constraints over the symbolic template variables. In addition, we add the {\em equivalence refutation constraint} $U_f(s^1_{\text{init}}) + f((s^1_{\text{init}})^{\text{out}}) < L_f(s^2_{\text{init}}) + f((s^2_{\text{init}})^{\text{out}}$. This results in a system of constraints whose every solution gives rise to a concrete instance of $f$, $U^1_f$ and $L^2_f$ that refute equivalence. Our synthesis then proceeds by solving the resulting system of constraints.

Note that considering $f$, $U^1_f$ and $L^2_f$ specified in terms of polynomials over program variables allows us to capture both expectations as well as higher moments of any random variable defined in terms of a polynomial expression over program variables in the output probability space of each program. We present our algorithm in Section~\ref{sec:algo}.



\paragraph{Extension to similarity refutation.} Our method for solving the equivalence refutation problem can be adapted to the similarity refutation problem. In particular, if we additionally require that the function $f$  is $1$-Lipschitz continuous, we show that
\[ L_f(s^2_{\text{init}}) + f((s^2_{\text{init}})^{\text{out}}) - U_f(s^1_{\text{init}}) - f((s^1_{\text{init}})^{\text{out}}) \]
evaluates to a lower bound on the Kantorovich distance between the output distributions of the two programs. We omit the details in order to keep this overview non-technical. We define Kantorovich distance and Lipschitz continuity in Section~\ref{sec:distances}, prove the soundness of UESMs and LESMs for proving lower bounds on Kantorovich distance in Section~\ref{sec:proofequivalencesimilarity} and show how to impose the additional $1$-Lipschitz continuity condition in our automated synthesis procedure in Section~\ref{sec:algo}.

\begin{example}\label{ex:uelsmsimil}
	Going back to the probabilistic program pair in Figure~\ref{fig:motivatingtwo} and Examples~\ref{ex:ulesm} and~\ref{ex:uelsmequiv}, since the function $f(\texttt{sent},\texttt{fail}) = \texttt{sent} - \texttt{fail}$ is $1$-Lipschitz with respect to the $L^1$-distance over $\mathbb{R}^2$, it immediately follows from our result in Example~\ref{ex:uelsmequiv} that the Kantorovich distance between output distributions of these programs is bounded from below by
	\[ L_f(s^2_{\text{init}}) + f((s^2_{\text{init}})^{\text{out}}) - U_f(s^1_{\text{init}}) + f((s^1_{\text{init}})^{\text{out}}) = 1997.5 - 998 = 999.5. \]
\end{example}

\paragraph{Limitations.} While our experimental results demonstrate the applicability of our method to a wide range of probabilistic program pairs, our approach has several limitations:
\begin{compactenum}
    \item Currently, our approach does not support programs with conditioning. 
    \item In general, the lower bounds on the Kantorovich distance of output distributions computed by our approach might not be tight.
    \item From a practical perspective, the performance of our automated method is dependent on the quality of {\em supporting linear invariants} generated for both programs. In our approach, these are computed by off-the-shelf invariant generators. See Section~\ref{sec:exper} for details.
\end{compactenum}

\begin{remark}[Martingale-based approach to relational analysis]\label{rmk:comparison}
	Martingale-based approach has been widely studied for static analysis of probabilistic programs, and UESMs and LESMs used in our approach are based on cost supermartingales~\cite{Wang0GCQS19} or super- and sub-invariants for expectation bounds~\cite{HarkKGK20} in single programs. In contrast to these concepts, our key differences are: (a) we consider proof rules for {\em relational analysis} of equivalence and similarity properties of program pairs; (b) we consider proof rules based on {\em both} super- {\em and} submartingales; (c) we consider the two types of martingales (UESMs and LESMs) {\em simultaneously}; and (d) in addition to synthesizing a supermartingale and a submartingale, we also need to simultaneously synthesize a function $f$ on outputs with respect to which the UESM and the LESM are defined.
	
	Furthermore, our results on UESMs and LESMs subsume and unify the results of~\cite{Wang0GCQS19,HarkKGK20}. Moreover, while~\cite{HarkKGK20} make the assumption of non-negative program variables and leave the generalization to programs with both positive and negative variables as a direction of future work~\cite[page~26]{HarkKGK20}, our UESM/LESMs apply to both positive and negative variables under the same assumptions as in~\cite{HarkKGK20}. The non-negative variables assumption is also imposed by the methods~\cite{AvanziniMS20,Wang0R21} for automated computation of bounds on expected values, whereas our UESM/LESMs are applicable to programs with both positive and negative variables. More detailed discussion of the differences is provided in Section~\ref{sec:proofrulesbounds}.
\end{remark}

\section{Preliminaries}\label{sec:prelims}


We use boldface notation for vectors, e.g. \( \mathbf{x}, \mathbf{y}, \) etc. An \( i \)-component of vector \( \val \) is denoted by \( \val[i] \). For an \( n \)-dimensional vector \( \val \), index \( 1 \leq i \leq n \), and number \( a \) we denote by \( \val(i\leftarrow a) \) the vector \( 
\mathbf{y} \) s.t. \( \mathbf{y}[i] = a \) and \( \mathbf{y}[j]=\val[j] \) for all \( 1 \leq j \leq n \) s.t. \( j \neq i \). Throughout the paper, we work with vectors representing valuations of variables of some program. We assume some canonical ordering of the variables, denoting them \( x_1, x_2, x_3,\ldots \), though in our examples we use aliases \( x, y, z, \ldots \) for better readability. Hence, for a program with \( n \) variables \( x_1,\ldots, x_n \), the number \( \val[i] \) denotes the value of variable \( x_i \) in valuation \( \val \in \Rset^n \).


We will operate with some basic notions of probability theory, such as \emph{probability space}, \emph{random variable}, \emph{expected value,} etc. We review the formal definitions of these notions in Section~\ref{app:probt} of the Supplementary material. We use the term \emph{probability distribution} interchangeably with \emph{probability measure}, particularly when the underlying sample space is (some subset of) an Euclidean space \( \Rset^n \). For a finite or countable set \( A \), we denote by \( \mathcal{P}(A) \) the set of all probability distributions on \( A \).

\vspace{-0.5em}
\subsection{Program Syntax}\label{sec:prelimssyntax}

\noindent{\bf\em Imperative-style syntax.} We consider imperative arithmetic programs consisting of standard programming constructs: variable assignments, sequential composition, conditional branching, and loops. Right-hand sides of variable assignments are formed by expressions built from constants, program variables and Borel-measurable arithmetic operators (Borel measurability~\cite{Williams91} is a standard assumption in probabilistic program analysis that is satisfied by all standard arithmetical operators). We denote by \( \expr(\val) \) the value of expression \( E \) in valuation \( \val \) and assume \( \expr(\val) \) to be well-defined for all valuations \( \val \). The guards of loops and conditional statements consist of \emph{predicates.} A predicate \( \Psi \) is a logical formula obtained by a finite number of applications of conjunction, disjunction, and negation operations on \emph{atomic predicates} of the form \( E_1 \leq E_2 \), where \( E_1, E_2 \) are expressions. We denote by \( \val \models \Psi \) the fact that a predicate \( \Psi \) is satisfied by the valuation~\( \val \).

%

\smallskip\noindent{\bf\em Probabilistic instructions.} Our programs also admit two types of \emph{probabilistic} statements. The first is \emph{probabilistic branching,} in our examples represented by the command \(\textbf{if prob}(p)\textbf{ then ... else ...}\). Upon the execution of such a statement, the program enters the if-branch with probability \( p \) and the else-branch with probability \( 1-p \). The second is \emph{sampling} of a variable value from a given probability distribution, represented by the \( \textbf{sample(\ldots)} \) statement in our examples. We allow sampling from both discrete and continuous probability distributions. In this work, we do not consider conditioning on observations.
	

Figure~\ref{fig:motivatingtwo} shows the typical form of the programs we work with. However, our algorithm  works with a more abstract and operational representation of programs called \emph{probabilistic control-flow graphs (pCFGs)}. The use of pCFGs is standard in probabilistic program analysis~\cite{ChatterjeeFNH18,AgrawalC018}, hence we use them as the primary syntactical representation of programs.

\smallskip\noindent{\bf\em Probabilistic control-flow graphs.} 
A {\em probabilistic control-flow graph (pCFG)} is an ordered tuple $\pCFG=(\locs,V,\Vout,\locinit,\vecinit,\transitions,\guards,\updates)$, where:
\begin{compactitem}
	\item $\locs$ is a finite set of {\em locations};
	\item $V=\{x_1,\dots,x_{|V|}\}$ is a finite set of {\em program variables};
	\item $\Vout = \{x_1,\dots,x_{|\Vout|}\} \subseteq V$ is a finite set of {\em output variables};
	\item $\locinit \in \locs$ is the {\em initial program location} and $\vecinit \in \mathbb{R}^{|V|}$ is the initial variable valuation;
	\item $\transitions\,\subseteq \locs \times \mathcal{P}(\locs)$ is a finite set of {\em transitions}. For each transition $\tau=(\loc,\prob)$, we say that $\loc$ is its {\em source location} and that $\prob:\locs \rightarrow [0,1]$ is a probability distribution over {\em successor locations}.
	\item $\guards$ is a map assigning to each transition $\tau=(\loc,\prob)\in\,\transitions$ a {\em guard} $\guards(\tau)$, which is a predicate over $V$ specifying whether $\tau$ can be executed.
	\item $\updates$ is a map assigning to each transition $\tau=(\loc,\prob)\in\,\transitions$ an {\em update} $\updates(\tau)=(j,u)$ where $j\in\{1,\dots,|V|\}$ is a {\em target variable index} and $u$ is an {\em update element} which can be:
	\begin{compactitem}
		\item the bottom element $u=\bot$, denoting no update;
		\item a Borel-measurable arithmetic expression $u:\mathbb{R}^{|V|}\rightarrow\mathbb{R}$, denoting deterministic update;
		\item a probability distribution $u = \delta$, denoting that variable value is sampled according to $\delta$.
	\end{compactitem}
\end{compactitem}
We assume the existence of a special {\em terminal location} $\locterm$. 
Terminal location $\locterm$ only has one outgoing self-loop transition $\tau=(\locterm,\prob)$ with $\prob(\locterm)=1$, $\guards(\tau) \equiv \text{true}$ and no variable update. 

We require that each location $\loc$ has at least one outgoing transition and that the disjunction of guards of all transitions outgoing from $\loc$ is equivalent to $\mathit{true}$, i.e.~$\bigvee_{\tau=(l,\_)}\guards(\tau)\equiv\mathit{true}$. These assumptions ensure that it is always possible to execute at least one transition and are imposed without loss of generality as we may always introduce an additional transition from $\loc$ to $\locterm$. We also require that guards of two distinct transitions $\tau_1$ and $\tau_2$ outgoing from $\loc$ are {\em mutually exclusive}, i.e.~$\guards(\tau_1)\land\guards(\tau_2)\equiv\mathit{false}$, to ensure that there is no non-determinism in the programming language.

\subsection{Program Semantics}\label{sec:prelimssemantics}

We use operational semantics that views each pCFG as a (general state space) Markov process. This approach is standard in probabilistic program analysis~\cite{ChatterjeeFNH18,KaminskiKMO18}. 
Towards the end of the subsection, we define the \emph{output distribution} of a probabilistic program.

\smallskip\noindent{\bf\em States, paths and runs.} A {\em state} in a pCFG $\pCFG$ is a tuple $(\loc,\mathbf{x})$, where $\loc$ is a location in $\pCFG$ and $\mathbf{x}\in\mathbb{R}^{|V|}$ is a variable valuation. A transition $\tau=(\loc,\prob)$ is {\em enabled} at a state $(\loc',\mathbf{x})$ if \( \loc = \loc' \) and $\mathbf{x}\models\guards(\tau)$. A state $(\loc',\mathbf{x}')$ is a {\em successor} of $(\loc,\mathbf{x})$, if there exists an enabled transition $\tau=(\loc,\prob)$ in $\pCFG$ such that $\prob(\loc') > 0$ and we can obtain $\mathbf{x}'$ by applying the update of $\tau$ to $\mathbf{x}$. The state $(\locinit,\vecinit)$ is the {\em initial state}. A state $(\loc,\mathbf{x})$ is said to be {\em terminal}, if $\loc=\locterm$. We use $\State^\pCFG$ to denote the set of all states~in~$\pCFG$.

A {\em finite path} in $\pCFG$ is a sequence $(\loc_0,\mathbf{x}_0),(\loc_1,\mathbf{x}_1),\dots,(\loc_k,\mathbf{x}_k)$ of states with $(\loc_0,\mathbf{x}_0)=(\locinit,\vecinit)$ and with $(\loc_{i+1},\mathbf{x}_{i+1})$ being a successor of $(\loc_i,\mathbf{x}_i)$ for each $0\leq i\leq k-1$. A state $(\loc,\mathbf{x})$ is {\em reachable} in $\pCFG$ if there exists a finite path in $\pCFG$ whose last state is $(\loc,\mathbf{x})$. A {\em run} (or an {\em execution}) in $\pCFG$ is an infinite sequence of states whose each finite prefix is a finite path. We use $\Fpath^\pCFG$ and $\Run^\pCFG$ to denote the set of all finite paths and all runs in $\pCFG$, respectively. We also use $\Reach^\pCFG$ to denote the set of all reachable states in $\pCFG$.

\smallskip\noindent{\bf\em Next valuation function.} 
Let \( \tau \in \,\transitions \) be a transition and \( \val \) a valuation. By \( \nextv(\tau,\val) \) we denote a random vector representing the successor valuation after \( \tau \) is taken in a state whose current valuation is \( \val \). Formally, let \(  (i,u) = \updates(\tau) \). Then \( \nextv(\tau,\val)[j] = \val[j]  \) for all variable indices \( 1 \leq j \leq |V| \) with \( j \neq i \) that are not updated by the transition, and
\[
\nextv(\tau,\val)[i] = \begin{cases}
\val[i] & \text{if } u = \bot ,\\
u(\val) & \text{if } u \text{ is a Borel-measurable arithmetic expression}, \\
X_\delta & \begin{aligned}[t]
&\text{if } u = \delta \text{ is a probability distribution}\\ &\text{(here \( X_\delta \) is a random variable following the distribution \( \delta \))}.
\end{aligned} 
\end{cases}
\]

\smallskip\noindent{\bf\em Semantics of pCFGs.} A pCFG $\pCFG$ defines a discrete-time Markov process taking values in the set of states of $\pCFG$, whose trajectories correspond to runs in $\pCFG$. Intuitively, the process starts in the initial state $ (\locinit,\vecinit) $ and in each time step it samples the next state along the run from the probability distribution defined by the current state. Suppose that, at time step $ i $, the process is in the state $(\loc_i,\mathbf{x}_i)$. The next state $(\loc_{i+1},\mathbf{x}_{i+1})$ is chosen as follows:
\begin{compactitem}
	\item Let $\tau = (\loc_i,\prob_i)$ be the unique transition enabled at $(\loc_i,\mathbf{x}_i)$. Recall, our assumptions on pCFGs ensure that at each state in $\pCFG$ there is a unique enabled transition.
	\item Sample the successor location $\loc_{i+1}$ from the probability distribution $\prob_i$.
	\item Sample a value of the random vector \( \nextv(\tau, \veca{x}_i) \) to get \( \veca{x}_{i+1} \).
\end{compactitem}

The above intuition can be formalized by a construction of a probability space whose sample space is \( \Run^\pCFG \). The construction is standard (see, e.g.,~\cite{MeynTweedie}) and we omit it. We denote by \( \probm^\pCFG \) the probability measure over the runs of \( \pCFG  \) which results from this construction and which thus formally captures the dynamics intuitively explained above.

\paragraph{Termination.} Our equivalence analysis is restricted to probabilistic programs that terminate almost-surely. This is both a conceptual assumption since we want our probabilistic programs to define valid probability distributions over their outputs, and also a technical assumption required by our approach.
Given a pCFG $\pCFG$, a run $\rho = (\loc_0,\mathbf{x}_0),(\loc_1,\mathbf{x}_1), \dots \in \Run^\pCFG$ is {\em terminating} if it reaches some terminal state. We use $\Termset\subseteq\Run^{\pCFG}$ to denote the set of all terminating runs in $\Run^{\pCFG}$. 
A pCFG $\pCFG$ terminates {\em almost-surely (a.s.)} if $\mathbb{P}^\pCFG[\Termset] = 1$. Automated almost-sure termination proving for linear and polynomial arithmetic probabilistic programs can be achieved by synthesizing a ranking supermartingale (RSM)~\cite{ChakarovS13,ChatterjeeFNH18,ChatterjeeFG16}. We define the {\em termination time} of $\rho$ via $\TimeTerm(\rho) = \inf_{i \geq 0}\{i \mid \loc_i = \locterm \}$, with $\TimeTerm(\rho) = \infty$ if $\rho$ is not terminating. 

\paragraph{Output distribution.} Every a.s.~terminating pCFG defines a probability distribution over its outputs. For every variable valuation $\mathbf{x} \in \mathbb{R}^{|V|}$, let $\mathbf{x}^{\out}$ be the projection of $\mathbf{x}$ to the components corresponding to variables in $\Vout$. Then, for a terminating run $\rho$ that reaches a terminal state $(\locterm,\mathbf{x})$, we say that $\mathbf{x}^{\out}$ is its {\em output variable valuation} (or, simply, its {\em output}). 
An a.s.~terminating pCFG $\pCFG$ defines a probability distribution over the space of all output variable valuations $\mathbb{R}^{|\Vout|}$ as follows. For each Borel-measurable subset $B \subseteq \mathbb{R}^{|\Vout|}$, we define
\[ \Output(B) = \Big\{ \rho \in \Run^\pCFG \mid \rho \textit{ reaches a terminal state } (\locterm,\mathbf{x}) \text{ with } \mathbf{x}^{\out} \in B \Big\}. \]
A {\em output distribution $\mu^{\pCFG}$} of \( \pCFG \) is defined by putting
\[ \mu^{\pCFG}[B] = \mathbb{P}^{\pCFG}\Big[\Output(B) \Big] \]
for each Borel-measurable subset $B$ of \( \Rset^{|\Vout|} \). 
Since $\pCFG$ is a.s.~terminating, we have $\mu^{\pCFG}[\mathbb{R}^{|\Vout|}] = 1$.

\subsection{Kantorovich distance of probability distributions}\label{sec:distances}

To measure the similarity of output distributions, we use the established \emph{Kantorovich distance} (also known as $1$-Wasserstein distance). The definition of this distance is parameterized by a choice of a \emph{metric} in \( \Rset^{|\Vout|} \); this is an Euclidean space and thus can be equipped with a number of standard metrics such as as discrete, $L^1$ (i.e.~Manhattan), $L^2$ (i.e.~Euclidean) or $L^{\infty}$ (i.e.~uniform).

The standard, "primal", definition of Kantorovich distance~\cite{villani2021topics} between two distributions \( \mu_1, \mu_2 \), which involves \emph{couplings} between the two distributions, is somewhat technical and we omit it. However, since we consider distances of \( \Rset^{|\Vout|} \)-valued distributions, we can use an equivalent \emph{dual} definition, which we present below. 






\paragraph{Kantorovich distance: definition.} The Kantorovich distance is only well-defined for pairs of distributions that have \emph{finite first moments} w.r.t.~the underlying metric. Given a metric \( d \colon {\Rset^{|\Vout|}}^2 \rightarrow \Rsetnn \), we say that a probability measure \( \probmalt \) has a \emph{finite first moment} w.r.t. \( d \) if there exists \( \veca{x}_0 \in \Rset^{|\Vout|} \) s.t. the function \( g_{\veca{x}_0} \colon \Rset^{|\Vout|} \rightarrow \Rsetnn \) defined by \( g_{\veca{x}_0}(\veca{x}) = d(\veca{x}_0,\veca{x}) \) satisfies \( \E_\probmalt[g_{\veca{x}_0}] < \infty \). Note that due to the triangle inequality property of metrics, \( \E_\probmalt[g_{\veca{x}_0}] < \infty \) iff \( \E_\probmalt[g_{\veca{y}}] < \infty \) for all \( \veca{y} \in \Rset^{|\Vout|} \).

\begin{definition}[Kantorovich distance of output distributions]
Let \( \pCFG_1 \), \( \pCFG_2 \) be two pCFGs with the same set of output variables \( \Vout \).
Further, let \( d \) be a metric in \( \Rset^{|\Vout|} \) such that \( \mu^{\pCFG_1} \) and \( \mu^{\pCFG_2} \) have finite first moments w.r.t. \( d \). The {\em Kantorovich (or $1$-Wasserstein) distance} between \( \mu^{\pCFG_1} \) and \( \mu^{\pCFG_2} \) is defined~via
	\[ \K_d(\mu^{\pCFG_1} ,  \mu^{\pCFG_2}) = \sup_{f \in L^1_d(\Rset^{|\Vout|})} \Big|\mathbb{E}_{\mu_1}[f] - \mathbb{E}_{\mu_2}[f] \Big|, \]
	where 
	\[ L^1_d(\Rset^{|\Vout|}) = \Big\{f:\Rset^{|\Vout|}\rightarrow \mathbb{R} \,\Big|\, |f(x) - f(y)| \leq d(x,y) \text{ for all } x,y\in\Omega\Big\} \]
	is the set of all $1$-Lipschitz continuous functions defined over the metric space $(\Rset^{|\Vout|},d)$, and $\mathbb{E}_{\mu_1}$ and $\mathbb{E}_{\mu_2}$ denote expectation operators with respect to $\mu_1$ and $\mu_2$.
\end{definition}

When defined with respect to the discrete metric (which assigns 0 distance to pairs of identical elements and unit distance to all distinct pairs), Kantorovich distance is equal to another well known distance of probability distributions: the Total Variation distance~\cite{villani2021topics}, which has been previously used in verification of finite-state probabilistic systems~\cite{DBLP:conf/csl/ChenK14,DBLP:conf/icalp/Kiefer18}. 
Moreover, for finite-state probabilistic models, the notion of simulation distance is based on the notion of \emph{optimal transport}~\cite{LARSEN19911,TracolDZ11}, and Kantorovich distance generalizes this notion to infinite-state models. The survey paper~\cite{deng2009kantorovich} gives an overview of various uses of Kantorovich distance in probabilistic verification.



\section{Problem Statement}\label{sec:problem}


In what follows, let $\pCFG_1$
and $\pCFG_2$ 
 be two pCFGs. Since we can only compare two probability distributions if they are defined over the same space, we require that the two pCFGs share a common output variable set $\Vout$. Furthermore, we assume that both $\pCFG_1$ and $\pCFG_2$ are a.s.~terminating.

\subsection{Equivalence Refutation Problem}



We say that $\pCFG_1$ and $\pCFG_2$ are {\em output equivalent}, if for every Borel-measurable set $B \subseteq \mathbb{R}^{|\Vout|}$ we have 
\begin{equation}\label{eq:equivalence}
	\mu^{\pCFG_1}[B] = \mu^{\pCFG_2}[B].
\end{equation}
(Recall that $\mu^{\pCFG_1}$ and $\mu^{\pCFG_2}$ denote output distributions of $\pCFG_1$ and $\pCFG_2$, respectively.)


\medskip
\begin{adjustwidth}{0.5cm}{}
	{\bf Problem~1 (Equivalence refutation problem).} Given two a.s.~terminating pCFGs $\pCFG_1$ and $\pCFG_2$ with the same output variable set $\Vout$, prove that $\pCFG_1$ and $\pCFG_2$ are \underline{not} output equivalent.
\end{adjustwidth}

\subsection{Similarity Refutation Problem}


The similarity refutation problem is parameterized by a metric over the output space which gives rise to a Kantorovich distance of distributions over this space.
%
Our theoretical results in this work are applicable to any metric. Our algorithmic approach will consider standard metrics such as discrete, $L^1$ (i.e.~Manhattan), $L^2$ (i.e.~Euclidean) or $L^{\infty}$ (i.e.~uniform). We provide the definition of each of these metrics in Section~\ref{app:similarityalgo} in the Supplementary material.

Let \( d \) be a metric over \( \Rset^{|\Vout|} \) such that the output distributions $\mu^{\pCFG_1}, \mu^{\pCFG_2}$ have finite first moments w.r.t. \( d \). We say that $\pCFG_1$ and $\pCFG_2$ are {\em $\epsilon$-output close}, if
	\begin{equation}\label{eq:distance}
		\K_d\Big(\mu^{\pCFG_1},\mu^{\pCFG_2}\Big) < \epsilon.
	\end{equation}
The definition of the similarity refutation problem follows straightforwardly.



\medskip
\begin{adjustwidth}{0.5cm}{}
	{\bf Problem~2 (Similarity refutation problem).} Given two a.s.~terminating pCFGs $\pCFG_1$ and $\pCFG_2$ with the same output variable set $\Vout$, a metric $d$ over $\mathbb{R}^{|\Vout|}$ such that \( \mu^{\pCFG_1} \) and \( \mu^{\pCFG_2} \) have finite first moments w.r.t \( d \), and $\epsilon>0$, prove that $\pCFG_1$ and $\pCFG_2$ are \underline{not} $\epsilon$-output close.
\end{adjustwidth}
\medskip


\paragraph{Finite first moment assumption.} The Similarity refutation problem assumes that the output distributions of the two programs have finite first moments w.r.t. the output state metric. Our algorithm (see Section~\ref{sec:algo}) checks this assumption (or more precisely, its sufficient conditions) automatically. Section~\ref{app:moments} in the supplementary material contains a discussion of what to do when we aim to analyze a pair of programs that violate the assumption.

\section{Martingale-Based Refutation Rules}\label{sec:expectationsupermartingales}

Our approach to equivalence and similarity refutation is based on the notions of upper and lower expectation supermartingales, which generalize and unify cost supermartingales~\cite{Wang0GCQS19} and super/sub-invariants~\cite{HarkKGK20}. Given an a.s.~terminating probabilistic program and a function \( f \) over its output variables, upper expectation supermartingales provide a sound proof rule for deriving upper bounds on the expected value of \( f \) at output in probabilistic programs, and similarly for lower expectation submartingales and lower bounds. 

In this section, we start by fixing the necessary terminology. Then, we state proof rules which subsume and unify the proof rules of~\cite{Wang0GCQS19,HarkKGK20}. Finally, we state our proof rules for equivalence and similarity refutation in probabilistic program pairs.

\subsection{Expectation Supermartingales and Submartingales: Definition} 

\noindent{\bf\em State and predicate functions, invariants.} Let $\pCFG=(\locs,V,\Vout,\locinit,\vecinit,\transitions,\guards,\updates)$ be a pCFG:
\begin{compactitem}
	\item A {\em state function} $\lem$ in $\pCFG$ is a function which to each location $\loc \in \locs$ assigns a Borel-measurable function $\lem(\loc):\mathbb{R}^{|V|} \rightarrow \mathbb{R}$ over program variables. We interchangeably use $\lem(\loc)(\mathbf{x})$~and~$\lem(\loc,\mathbf{x})$.
	\item A {\em predicate function} $\Pi$ in $\pCFG$ is a function which to each location $\loc \in \locs$ assigns a predicate $\Pi(\loc)$ over program variables. It naturally induces a set of states $\{(\loc,\mathbf{x}) \mid \mathbf{x} \models \Pi(\loc)\}$. With a slight abuse of notation, we also use $\Pi$ to refer to this set of states.
	\item A predicate function \( \Pi \) is an {\em invariant} if \( \Pi \) contains all reachable states in $\pCFG$, i.e.~if for each reachable state $(\loc,\mathbf{x}) \in \Reach^{\pCFG}$ we have $\mathbf{x} \models I(\loc)$.
\end{compactitem}

\smallskip\noindent{\bf\em Upper expectation supermartingales.} 
We now define upper expectation supermartingales (UESMs). Consider a pCFG $\pCFG$ and let $f: \mathbb{R}^{|\Vout|} \rightarrow \mathbb{R}$ be a Borel-measurable function over its outputs. 
A UESM for $f$ is a state function $U_f$ that satisfies certain conditions in every reachable state.

Since it is generally not feasible to compute the set of all reachable states in a program, we define UESMs with respect to a supporting invariant that over-approximates the set of all reachable states. This is done with later automation in mind, and our algorithmic approach in Section~\ref{sec:algo} will first automatically synthesize this supporting invariant (or, alternatively, invariants can be provided by the user) before proceeding to the synthesis of an UESM. 
Example~\ref{ex:ulesm} shows an example UESM.

\begin{definition}[Upper expectation supermartingale (UESM)]\label{def:uesm}
	Let $\pCFG=(\locs,V,\Vout,\locinit,\vecinit,\transitions,\guards,\updates)$ be an a.s.~terminating pCFG, $I$ be an invariant in $\pCFG$ and $f:\mathbb{R}^{|\Vout|} \rightarrow \mathbb{R}$ be a Borel-measurable function over the output variables of $\pCFG$. An {\em upper expectation supermartingale (UESM) for $f$} with respect to the invariant $I$ is a state function $U_f$ satisfying the following two conditions:
	\begin{compactenum}
		\item {\em Zero on output.} For every $\mathbf{x} \models I(\locterm)$, we have $U_f(\locterm,\mathbf{x}) = 0$.
		\item {\em Expected $f$-decrease.} For every location $\loc\in\locs$ , transition $\tau = (\loc,\prob) \in\,\transitions$, and valuation \( \val \) s.t. $\val \models I(\loc) \land \guards(\tau)$, we require the following: for \( \veca{N} = \nextv(\tau,\val)\) it holds
		 \begin{equation}
		 \label{eq:ufexp}
		 U_f(\loc,\mathbf{x}) \geq \sum_{\loc'\in\locs}\prob(\loc') \cdot \E[U_f(\ell', \veca{N}) + f(\veca{N}^\out) ] - f(\mathbf{x}^{\out})
		 \end{equation}
		(where \( \veca{N}^\out \) is the projection of the random vector \( \veca{N} \) onto the \( V_\out \)-indexed components). Intuitively, this condition requires that, in any step of computation, any increase in the \( f \)-value of the current valuation (projected onto the output variables) is matched, in expectation, by the decrease of the \( U_f \)-value.
	\end{compactenum}
\end{definition}

\noindent{\bf\em Lower expectation submartingales.} A lower expectation submartingale is defined analogously as an UESM, with the expected \( f \)-decrease condition replaced by the dual expected \emph{\( f \)-increase} condition. Example~\ref{ex:ulesm} shows an example LESM.

\begin{definition}[Lower expectation submartingale (LESM)]\label{def:lesm}
	Let $\pCFG=(\locs,V,\Vout,\locinit,\vecinit,\transitions,\guards,\updates)$ be an a.s.~terminating pCFG, $I$ be an invariant in $\pCFG$ and $f:\mathbb{R}^{|\Vout|} \rightarrow \mathbb{R}$ be a Borel-measurable function over the output variables of $\pCFG$. A {\em lower expectation submartingale (LESM) for $f$} with respect to the invariant $I$ is a state function $L_f$ satisfying the following two conditions:
	\begin{compactenum}
		\item {\em Zero on output.} For every $\mathbf{x} \models I(\locterm)$, we have $L_f(\locterm,\mathbf{x}) = 0$.
		\item {\em Expected $f$-increase.} For every location $\loc\in\locs$ , transition $\tau = (\loc,\prob) \in\,\transitions$, and valuation \( \val \) s.t. $\val \models I(\loc) \land \guards(\tau)$, we require the following: for \( \veca{N} = \nextv(\tau,\val)\) it holds
		 \begin{equation}
		 \label{eq:lfexp}
		 L_f(\loc,\mathbf{x}) \leq \sum_{\loc'\in\locs}\prob(\loc') \cdot \E[L_f(\ell', \veca{N}) + f(\veca{N}^\out) ] - f(\mathbf{x}^{\out}).
		 \end{equation}
	\end{compactenum}
\end{definition}

\subsection{Expectation Bounds via U/LESMs}\label{sec:proofrulesbounds}

In this subsection, we state Theorem~\ref{thm:ulesm-exp-soundness}, which shows that under certain conditions, a U/LESM for a function \( f \) provides an upper/lower bound on the expected value of \( f \) at output. The theorem is used to prove soundness of our proof rules for equivalence and similarity refutation in Section~\ref{sec:proofequivalencesimilarity}. 

The result of Theorem~\ref{thm:ulesm-exp-soundness} subsumes and unifies the proof rules of~\cite{Wang0GCQS19,HarkKGK20}. Both these papers use Optional Stopping Theorem (OST)~\cite{Williams91} to formulate conditions under which U/LESMs provide sound proof rules for computing expectation bounds. The proof rule of~\cite{HarkKGK20} uses the classical OST~\cite{Williams91} (though only for lower bounds, see the discussion below), whereas the work of~\cite{Wang0GCQS19} derives what they call Extended OST to relax and replace some of the classical OST conditions.

Our approach to the formulation and proof of Theorem~\ref{thm:ulesm-exp-soundness} differs from the proof rules in~\cite{Wang0GCQS19,HarkKGK20} in the following aspects: First, in~\cite{Wang0GCQS19}, the upper/lower cost supermartingales provided bounds on the expected value of a single~\emph{cost} variable whereas we consider arbitrary functions of the output variables. Second, the approach in~\cite{HarkKGK20} considers functions \( f \) taking values in the \emph{non-negative} and \emph{extended} real interval \( [0,\infty] \). In such a case, the space of all such functions forms a complete lattice, which allows the use of Park induction~\cite{park1969fixpoint} to obtain upper bounds. In contrast, we work with functions taking values in \( (-\infty,\infty) \), which precludes such approach. We need to work with the co-domain \( (-\infty,\infty) \) to achieve automation: our algorithm, presented in Section~\ref{sec:algo}, works with functions represented via polynomials, which in general have this co-domain.  Hence, we use OST for both upper and lower bounds.



We start by stating the conditions under which U/LESMs provide the required bounds, which we call the {\em OST-soundness conditions}. The first three conditions are conditions imposed by the classical OST~\cite{Williams91}, whereas the fourth condition is imposed by the Extended OST~\cite{Wang0GCQS19}. In the following, we use the notion of \emph{conditional expectation.} For the sake of brevity, we omit its formal definition. Intuitively, when dealing with a pCFG \( \pCFG \) and a random variable \( X \) over the runs of \( \pCFG \), we denote by \( E[X\mid \mathcal{F}_t] \) the \emph{conditional expected value} of \( X \) given the knowledge of the first \( t \) steps of \( \pCFG \)'s run.

\begin{definition}[OST-soundness]\label{def:ostsound}
Let \( \pCFG \) be a pCFG, \( \eta \) be a state function in \( \pCFG \), and \( f \colon \Rset^{|V_\out|} \rightarrow \Rset\) be a Borel measurable function. Denote by \( Z_i(\rho) \) the \( i \)-the state along a run \( \rho \), and let \( Y_i\) be defined by \( Y_i := \eta(Z_i) + f(\veca{X}_i^\out)\) for any \( i \geq 0 \). We say that the tuple \( (\pCFG, \eta, f) \) is \emph{OST-sound} if \( \E[|Y_i|] < \infty \) for every \( i\geq 0 \) and moreover, at least one of the following conditions (C1)--(C4) holds:
\begin{itemize}
\item[(C1)] There exists a constant \( c \) such that \( \TimeTerm \leq c \) with probability 1 (i.e., the termination time of the program is uniformly bounded).
\item[(C2)] There exists a constant \( c \) such that for each \( t \in \Nset \) and each run \( \rho \) it holds that \[ |Y_{\min\{t, \TimeTerm(\rho)\}}(\rho)| \leq c \] (i.e., \( Y_{i}(\rho) \) is uniformly bounded from below and above up until the point of termination).
\item[(C3)] \( \E[\TimeTerm] < \infty \), \( \E[|Y_0|] < \infty \), and there exists a constant \( c \) such that for every \( t \in \Nset \) it holds \( \E[|Y_{t+1}-Y_{t}|\mid \sigmaAlg_t] \leq c \) (i.e., the expected one-step change of \( Y_i \) is uniformly bounded over the program runtime, even if conditioned by the whole past history of the program).
\item [(C4)] There exist real numbers \( M, c_1, c_2, d \) such that (i) for all sufficiently large \( n \in \Nset \) it holds \( \probm(\TimeTerm > n) \leq c_1 \cdot e^{-c_2 \cdot n} \); and (ii) for all \( t \in \Nset \) it holds \( |Y_{n+1} - Y_n| \leq M\cdot n^d \).
\end{itemize}
\end{definition}

Our algorithm presented in Section~\ref{sec:algo} will automatically enforce OST-soundness. We are now ready to state the U/LESM soundness theorem. 

\begin{restatable}[Soundness of U/LESMs]{theorem}{soundness}\label{thm:ulesm-exp-soundness}
Let \( \pCFG=(\locs,V,\Vout,\locinit,\vecinit,\transitions,\guards,\updates) \) be an a.s. terminating pCFG with output distribution \( \mu^\pCFG \) and \( f \colon \Rset^{|V_\out|} \rightarrow \Rset\) a Borel measurable function over the outputs of $\pCFG$. Let \( U_f \) and \( L_f \) be an upper (respectively lower) expectation supermartingale for \( f \) w.r.t. some invariant. Assume that \( (\pCFG, U_f, f) \) and \( (\pCFG, L_f, f) \) are OST-sound. Then \( \E_{\mu^\pCFG}[f(\val^\out)] \) is well-defined and
\begin{align*}
U_f(\locinit, \vecinit) + f(\vecinit^\out) &\geq \E_{\mu^\pCFG}[f(\val^\out)], \\
L_f(\locinit, \vecinit) + f(\vecinit^\out) &\leq \E_{\mu^\pCFG}[f(\val^\out)]. 
\end{align*}

\end{restatable}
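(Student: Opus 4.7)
The plan is to fix $\lem \in \{U_f, L_f\}$, build the stochastic process $Y_i = \lem(Z_i) + f(\veca{X}_i^{\out})$ as in Definition~\ref{def:ostsound}, show that it is a super-/sub-martingale with respect to the canonical filtration $(\sigmaAlg_i)_{i\geq 0}$ on runs of $\pCFG$, and then apply an Optional Stopping Theorem at the termination time $\TimeTerm$, exploiting the Zero on output condition to identify $Y_\TimeTerm$ with $f(\veca{X}_\TimeTerm^{\out})$.

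First I would rewrite the Expected $f$-decrease inequality~(\ref{eq:ufexp}) as
\[ \E[\lem(Z_{i+1}) + f(\veca{X}_{i+1}^{\out}) \mid Z_i = (\ell,\val)] \;\leq\; \lem(\ell,\val) + f(\val^{\out}) \]
for every state $(\ell,\val)$ satisfying the invariant $I$ and the (unique a.s.-enabled) transition's guard. Since $I$ contains all reachable states and the pCFG semantics selects exactly one enabled transition a.s., this yields $\E[Y_{i+1} \mid \sigmaAlg_i] \leq Y_i$ almost surely when $\lem = U_f$, so $(Y_i)_{i \geq 0}$ is a supermartingale; integrability at each step is part of the OST-soundness assumption. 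The dual computation with~(\ref{eq:lfexp}) shows that $(Y_i)$ is a submartingale when $\lem = L_f$.

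Next I would apply the Optional Stopping Theorem to the stopped process $(Y_{i\wedge\TimeTerm})$. Since $\pCFG$ terminates a.s., $\TimeTerm < \infty$ almost surely. Among the four conditions of Definition~\ref{def:ostsound}, (C1)--(C3) are precisely the three classical hypotheses of the OST (see e.g.~\cite[Theorem 10.10]{Williams91}), while (C4) falls under the Extended OST of~\cite{Wang0GCQS19}; in either case we conclude $\E[Y_\TimeTerm] \leq \E[Y_0]$ for the supermartingale (respectively $\geq$ for the submartingale). By Zero on output, $\lem(\locterm,\veca{x}) = 0$ on all reachable terminal states, so $Y_\TimeTerm = f(\veca{X}_\TimeTerm^{\out})$ almost surely, and the integrability of $Y_\TimeTerm$ furnished by OST gives the well-definedness of $\E_{\mu^\pCFG}[f(\val^\out)]$. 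Since $Y_0 = \lem(\locinit,\vecinit) + f(\vecinit^\out)$ is deterministic, rearranging yields the two claimed bounds.

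The main obstacle will be handling condition (C4) cleanly, since, unlike the classical OST, it is a non-standard variant combining an exponential tail bound on $\TimeTerm$ with polynomial growth of the one-step increments; moreover, unlike the non-negative co-domain setting of~\cite{HarkKGK20}, our $Y_i$ can take both signs, so the dominated-convergence step has to be carried out for signed martingales (which, fortunately, the Extended OST of~\cite{Wang0GCQS19} already supports, so this amounts to a careful invocation rather than reproving it). A secondary technicality is lifting the pointwise inequality~(\ref{eq:ufexp}), which is only stated on $I(\loc) \cap \guards(\tau)$, to an a.s.~super-/sub-martingale inequality on the full probability space $\Run^\pCFG$: this uses that the event $\{Z_i \notin I\}$ has probability zero (by the invariant property) and that mutual exclusivity together with coverage of guards ensures exactly one transition is almost surely enabled at each reached state.
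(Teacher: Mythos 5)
Your proposal is correct and follows essentially the same route as the paper's proof: define $Y_i = \lem(Z_i) + f(\veca{X}_i^{\out})$, verify the super-/sub-martingale property from the expected $f$-decrease/increase conditions, apply the classical or extended Optional Stopping Theorem at $\TimeTerm$ according to which OST-soundness condition holds, and use Zero on output to identify $Y_{\TimeTerm}$ with $f(\veca{X}_{\TimeTerm}^{\out})$. The only detail the paper elaborates that you gloss over is the measure-theoretic verification that the transition-kernel formula for $\E[Y_{i+1}\mid\sigmaAlg_i]$ is indeed a valid version of the conditional expectation (done in the paper via a cylinder-set and $\pi$-system argument), but this is a routine technicality and your sketch correctly flags the analogous lifting issue.
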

\begin{proof}[Proof (sketch).]
Let \( Z_n \) denote the \( n \)-th state along a run of \( \pCFG \) and \( \veca{X}_n \) denotes the \( n \)-th valuation encountered along a run. For \( L_f \), we define a stochastic process \( Y = (Y_n)_{n=0}^{\infty} \) by putting \( Y_n := L_f(Z_n) + f(\veca{X}_n^\out)\). The inequality for \( L_f \) follows from application of the (extended) optional stopping theorem~\cite{Williams91, Wang0GCQS19} to \( Y \), which is permissible due to the OST-soundness assumption. The argument for \( U_f \) is analogous. Full proof can be found in Section~\ref{sec:soundproof} of the Supplementary material.
\end{proof}

\subsection{Proof Rules for Equivalence and Similarity Refutation}\label{sec:proofequivalencesimilarity}

We now show how to use the results in the previous section to derive refutation rules for the equivalence and similarity problems. 
 Example~\ref{ex:uelsmequiv} shows an application of this proof rule for equivalence refutation, and Example~\ref{ex:uelsmsimil} for similarity refutation.

\begin{theorem}[Soundness of equivalence and similarity refutation]
\label{thm:ulesm-refute}
Consider two a.s.~terminating pCFGs $\pCFG_1=(\locs^1,V^1,\Vout,\locinit^1,\vecinit^1,\transitions^1,\guards^1,\updates^1)$ and $\pCFG_2=(\locs^2,V^2,\Vout,\locinit^2,\vecinit^2,\transitions^2,\guards^2,\updates^2)$.
Assume that there exists a Borel-measurable function \( f \colon \Rset^{|V_\out|} \rightarrow \Rset \) and two state functions, \( U_f \) for \( \pCFG_1 \) and \( L_f \) for \( \pCFG_2 \), such that the following holds:
\begin{compactitem}
\item \( U_f \) is a UESM for \( f \) in \( \pCFG_1 \) such that \( (\pCFG_1, U_f, f) \) is OST-sound;
\item \( L_f \) is an LESM for \( f \) in \( \pCFG_2 \) such that \( (\pCFG_2, L_f, f) \) is OST-sound;
\item \( U_f(\locinit^1, \vecinit^1) + f((\vecinit^1)^\out) < L_f(\locinit^2, \vecinit^2) + f((\vecinit^2)^\out) \).
\end{compactitem}
Then \( \pCFG_1 \) and \( \pCFG_2 \) do not define equivalent output distributions.

Moreover, if \( f \) is \( 1 \)-Lipschitz continuous under a metric \( d \) of the output space \( \Rset^{|V_\out|} \), then 
\[\K_d\Big(\mu^{\pCFG_1},\mu^{\pCFG_2}\Big) \geq L_f(\locinit^2, \vecinit^2) + f((\vecinit^2)^\out) - U_f(\locinit^1, \vecinit^1) - f((\vecinit^1)^\out).\]
\end{theorem}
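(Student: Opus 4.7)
The plan is to reduce the statement to a direct application of Theorem~\ref{thm:ulesm-exp-soundness}, applied once to each program, and then combine the resulting expectation bounds via the hypothesis and (for the similarity claim) the dual characterization of the Kantorovich distance given in Section~\ref{sec:distances}.

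First, I would invoke Theorem~\ref{thm:ulesm-exp-soundness} for $(\pCFG_1, U_f, f)$, which is OST-sound by assumption. This guarantees that $\E_{\mu^{\pCFG_1}}[f(\val^{\out})]$ is well-defined and that
\[
U_f(\locinit^1,\vecinit^1) + f((\vecinit^1)^{\out}) \;\geq\; \E_{\mu^{\pCFG_1}}[f(\val^{\out})].
\]
Analogously, applying the theorem to $(\pCFG_2, L_f, f)$ yields
\[
L_f(\locinit^2,\vecinit^2) + f((\vecinit^2)^{\out}) \;\leq\; \E_{\mu^{\pCFG_2}}[f(\val^{\out})].
\]
Chaining these bounds with the strict inequality given as the third hypothesis, I obtain
\[
\E_{\mu^{\pCFG_1}}[f] \;\leq\; U_f(\locinit^1,\vecinit^1) + f((\vecinit^1)^{\out}) \;<\; L_f(\locinit^2,\vecinit^2) + f((\vecinit^2)^{\out}) \;\leq\; \E_{\mu^{\pCFG_2}}[f].
\]
In particular, $\E_{\mu^{\pCFG_1}}[f] \neq \E_{\mu^{\pCFG_2}}[f]$. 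Since any two equivalent output distributions must assign equal expectations to every function that is integrable under both, this refutes output equivalence of $\pCFG_1$ and $\pCFG_2$, proving the first part of the theorem.

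For the similarity claim, I would use the dual characterization of the Kantorovich distance stated in Section~\ref{sec:distances}: $\K_d(\mu^{\pCFG_1},\mu^{\pCFG_2}) = \sup_{g \in L^1_d(\Rset^{|\Vout|})} |\E_{\mu^{\pCFG_1}}[g] - \E_{\mu^{\pCFG_2}}[g]|$. Under the additional hypothesis that $f$ is $1$-Lipschitz under $d$, we have $f \in L^1_d(\Rset^{|\Vout|})$, so $f$ is an admissible test function in this supremum. Combined with the inequality chain above, this yields
\[
\K_d\bigl(\mu^{\pCFG_1},\mu^{\pCFG_2}\bigr) \;\geq\; \E_{\mu^{\pCFG_2}}[f] - \E_{\mu^{\pCFG_1}}[f] \;\geq\; L_f(\locinit^2,\vecinit^2) + f((\vecinit^2)^{\out}) - U_f(\locinit^1,\vecinit^1) - f((\vecinit^1)^{\out}),
\]
as desired.

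The argument is essentially a bookkeeping step once Theorem~\ref{thm:ulesm-exp-soundness} is in hand, so no serious obstacle arises in the main line of reasoning. The only subtlety I foresee concerns the similarity part: the Kantorovich distance is only defined when both $\mu^{\pCFG_1}$ and $\mu^{\pCFG_2}$ have finite first moments with respect to $d$. One should therefore note (or assume, consistently with the problem statement in Section~\ref{sec:problem}) that this holds; the 1-Lipschitz property of $f$ together with OST-soundness gives finiteness of $\E_{\mu^{\pCFG_i}}[f]$, but finite first moments of the distributions themselves need to be imported from the ambient assumption of Problem~2. Once this is acknowledged, the dual definition of $\K_d$ applies and the inequality above is immediate.
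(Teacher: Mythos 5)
Your proposal is correct and follows essentially the same route as the paper's own proof: both apply Theorem~\ref{thm:ulesm-exp-soundness} to each program, chain the resulting bounds with the strict inequality hypothesis to conclude $\E_{\mu^{\pCFG_1}}[f] < \E_{\mu^{\pCFG_2}}[f]$, and obtain the similarity bound directly from the dual definition of the Kantorovich distance. Your added remark about importing the finite-first-moment assumption from Problem~2 is a reasonable point of care that the paper leaves implicit.
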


\begin{proof}
	
	From Theorem~\ref{thm:ulesm-exp-soundness} we have
	\begin{align}
		\E_{\mu^{\pCFG_1}}[f(\val^\out)] &\leq U_f(\locinit^1, \vecinit^1) + f((\vecinit^1)^\out)\nonumber \\
		&< L_f(\locinit^2, \vecinit^2) + f((\vecinit^2)^\out) \leq \E_{ \mu^{\pCFG_2}}[f(\val^\out)] \label{eq:refute}
	\end{align}
	Hence, $\E_{\mu^{\pCFG_1}}[f(\val^\out)] < \E_{\mu^{\pCFG_2}}[f(\val^\out)]$, and so the output distributions $\mu^{\pCFG_1}$ and $\mu^{\pCFG_2}$ are not equivalent (otherwise, any measureable \( f \) would have the same expectation under both measures).
	
	The second part follows directly from~\eqref{eq:refute} and from the definition of the Kantorovich distance.
\end{proof}

To conclude this section, we highlight that searching for a Borel-measurable function \( f \colon \Rset^{|V_\out|} \rightarrow \Rset \) such that $\E_{\mu^{\pCFG_1}}[f(\val^\out)] \neq \E_{\mu^{\pCFG_2}}[f(\val^\out)]$ yields {\em both sound and complete} proof rule for refuting equivalence of output distributions. While soundess follows from Theorem~\ref{thm:ulesm-refute}, to prove completeness suppose that two output distributions are not equivalent. Then, there exists an event $A$ over outputs such that $\mu^{\pCFG_1}[A]\neq \mu^{\pCFG_1}[A]$. Hence, with $f$ being the indicator function $I(A)$ of the event $A$, which is indeed a Borel-measurable function, we have $E_{\mu^{\pCFG_1}}[I(A)]\neq E_{\mu^{\pCFG_1}}[I(A)]$.

\section{Automated Constraint Solving-based Algorithm}\label{sec:algo}

In this section, we present our algorithms for automated equivalence and similarity refutation.
In the sequel, let $\pCFG_1=(\locs^1,V^1,\Vout,\locinit^1,\vecinit^1,\transitions^1,\guards^1,\updates^1)$ and $\pCFG_2=(\locs^2,V^2,\Vout,\locinit^2,\vecinit^2,\transitions^2,\guards^2,\updates^2)$ be two a.s.~terminating pCFGs with a common output variable set $\Vout$.

\smallskip\noindent{\bf\em Assumptions.} Our algorithms impose the following assumptions:
\begin{compactitem}
	\item {\em Polynomial programs.} We consider probabilistic programs in which all arithmetic expressions are {\em polynomials} over program variables. Furthermore, by introducing dummy variables for expressions appearing in transition guards, we without loss of generality assume that arithmetic expressions appearing in transition guards are linear.
	\item {\em Finite moments of probability distributions.} We assume that each probability distribution \( \delta \) appearing in sampling instructions  has {\em  finite moments} which are accessible to the algorithm, i.e. for each \( p \in \Nset \), the \( p \)-th moment $m_{\delta}(p) = \mathbb{E}_{X \sim \delta}[|X|^p]$ is finite and can be computed by the algorithm. This is a standard assumption in static probabilistic program analysis and allows sampling instructions from most standard probability distributions.
	\item {\em Linear invariants.} Recall, 
	in Definition~\ref{def:uesm} and Definition~\ref{def:lesm} we defined U/LESMs with respect to supporting invariants. We assume that we are provided with {\em linear invariants} $I_1$ and $I_2$ for $\pCFG_1$ and $\pCFG_2$, respectively. Linear invariant generation is a well-studied problem in program analysis; in our implementation, we use the methods of~\cite{FeautrierG10,SankaranarayananSM04} to synthesize supporting linear invariants $I_1$ and $I_2$.
	\item {\em OST-soundness.} Recall from Section~\ref{sec:proofrulesbounds} that we need to impose one of the OST-soundness conditions in Definition~\ref{def:ostsound} on each pCFG for the proof rules based on U/LESMs to be sound. These conditions impose restrictions on the pCFG as well as on the function on outputs and the U/LESMs that we need to synthesize. In what follows, we state the restrictions imposed on pCFGs by each of the OST-conditions. In principle, a different restriction can be imposed on each of the two pCFGs. To streamline the presentation, we consider imposing the same condition on both pCFGs, in an order of preference specified in the next subsection. Then, depending on the OST-condition that the algorithm uses for the pCFGs, we will also constrain our output functions and U/LESMs to satisfy the corresponding restrictions. We use the same enumeration of OST-conditions as in Definition~\ref{def:ostsound}. We use $\TimeTerm_1$ and $\TimeTerm_2$ to denote random variables defined by termination time in $\pCFG_1$ and $\pCFG_2$:
	\begin{itemize}
		\item[(C1)] The pCFGs have bounded termination time, i.e.~there exists $c>0$ s.t.~$\TimeTerm_1(\rho) \leq c$ for all runs $\rho$ in $\pCFG_1$ and $\TimeTerm_2(\rho) \leq c$ for all runs $\rho$ in $\pCFG_2$. To enforce this condition, it suffices to restrict our attention to programs in which all loops are statically bounded. 
		\item[(C2)] This condition imposes no restrictions on pCFGs.
		\item[(C3)] The pCFGs have bounded {\em expected} termination time, i.e.~$\mathbb{E}^{\pCFG_i}[\TimeTerm_i] < \infty$ for $i\in\{1,2\}$. To verify this, it suffices to synthesize a ranking supermartingale (RSM)~\cite{ChakarovS13}. Automated synthesis of RSMs in polynomial programs was considered in~\cite{ChatterjeeFNH18,ChatterjeeFG16}.
		\item[(C4)] There exist real numbers \( c_1, c_2 \) such that, for all sufficiently large \( n \in \Nset \), it holds \( \probm(\TimeTerm_i > n) \leq c_1 \cdot e^{-c_2 \cdot n} \) for $i\in\{1,2\}$. It was shown in~\cite{Wang0GCQS19} that, in polynomial programs, to verify the first condition it suffices to synthesize an RSM as in (C3).
	\end{itemize}
\end{compactitem}

\subsection{Algorithm for the Equivalence Refutation Problem}\label{sec:algoequiv}

\noindent{\bf\em Algorithm outline.} Our algorithm uses constraint solving-based synthesis to simultaneously compute a function $f$ over output variables $\Vout$, an UESM $U^1_f$ for $f$ in $\pCFG_1$ and an LESM $L^2_f$ for $f$ in $\pCFG_2$. The algorithm proceeds in four steps. First, it fixes symbolic polynomial templates for $f$, $U^1_f$ and $L^2_f$. Second, it collects the defining constraints of UESMs and LESMs, the equivalence refutation constraint, and the constraints that encode OST-condition restrictions. Third, it translates these constraints into a linear programming (LP) instance. Fourth, it uses an LP solver to solve the resulting LP instance, with each solution giving rise to a valid triple of $f$, $U^1_f$ and $L^2_f$.

Our algorithm builds on classical constraint solving-based methods for static analysis of polynomial (probabilistic) programs for termination~\cite{ChatterjeeFG16}, reachability~\cite{AsadiC0GM21}, safety~\cite{Chatterjee0GG20} or cost~\cite{Wang0GCQS19,ZikelicCBR22} properties. Hence, we keep our exposition brief and focus on the Step~2 of our algorithm which contains the main algorithmic novelty, since it collects the defining constraints of U/LESMs and the relational constraint for equivalence refutation. 

\smallskip\noindent{\bf\em Algorithm parameters.} Our algorithm takes as an input a natural number parameter $d \in \mathbb{N}$ which denotes the maximal degree of polynomials that the algorithm uses for synthesis. Also, it determines which of the OST-soundness conditions to impose on the pCFGs as follows:
\begin{compactenum}
	\item If the pCFGs have bounded termination time (e.g.~they only contain statically bounded loops), then our algorithm imposes (C1) on them since this condition does not introduce any constraints on $f$, $U^1_f$ and $L^2_f$.
	\item Else, if the pCFGs have bounded updates, i.e.~there exists $M>0$ such that every update element in each pCFG changes variable value by at most $M$, then our algorithm imposes (C4) on them. This is because it was shown in~\cite{Wang0GCQS19} that, for a polynomial program with bounded updates, \( (\pCFG, \lem, f) \) is OST-sound with (C4) satisfied for state function $\lem$ and output function $f$. Hence, the algorithm need not introduce any constraints on $f$, $U^1_f$ and $L^2_f$.
	\item Else, if the pCFGs have bounded expected termination time (e.g.~the method of~\cite{ChatterjeeFG16} successfully synthesizes RSMs), then our algorithm imposes (C3) on them as this condition introduces milder constraints on $f$, $U^1_f$ and $L^2_f$ than (C2).
	\item Else, our algorithm imposes (C2) on the pCFGs.
\end{compactenum}

\smallskip\noindent{\bf\em Step~1: Symbolic templates.} The algorithm fixes a symbolic polynomial template for $f$ in the form of a symbolic polynomial of degree at most $d$ over output variables $\Vout$. It also fixes a symbolic polynomial template for the UESM $U^1_f$ for $f$ in $\pCFG_1$, in the form of a symbolic polynomial $U^1_f(\loc)$ of degree at most $d$ over program variables $V^1$ for each location $\loc\in\locs^1$. A template for the LESM $L^2_f$ for $f$ in $\pCFG_2$ is fixed analogously.

This is formally done as follows. Let $\Mono_d(V)$ and $\Mono_d(\Vout)$ denote the sets of all monomials of degree at most $d$ over the variables $V$ and $\Vout$, respectively. The templates for $f$, for $U^1_f(\loc)$ at each location $\loc\in\locs^1$ and for $L^2_f(\loc)$ at each location $\loc\in\locs^2$ are respectively defined by fixing the following symbolic polynomial expressions
\[ \sum_{m \in \Mono_d(\Vout)} f_m \cdot m,\quad\quad \sum_{m \in \Mono_d(V)} u^\loc_m \cdot m,\quad\quad \sum_{m \in \Mono_d(V)} l^\loc_m \cdot m,\]
where $f_m$, $u^\loc_m$ and $l^\loc_m$ are a real-valued {\em symbolic template variables} for each $m$ and $\loc$.

\smallskip\noindent{\bf\em Step~2: Constraint collection.} The algorithm collects all defining constraints for $U^1_f$ and $L^2_f$ to be a UESM and an LESM, the equivalence refutation constraint and the OST-soundness constraints. For every collected constraint that contains an expectation operator, the algorithm symbolically evaluates the expected values to obtain expectation-free polynomial expressions over symbolic template and program variables. This is possible since all involved expressions are polynomials over program variables, all moments of probability distributions are finite and can be computed, and moreover, the expectations integrate over a single variable in the polynomial expression (since in pCFGs, at most one variable is updated in each step).
\begin{compactenum}
	\item {\em UESM constraints.} By Definition~\ref{def:uesm}, an UESM must satisfy the Zero on output condition and the Expected $f$-decrease condition. Both constraints are defined with respect to the supporting invariant $I_1$. As explained above, such invariants can be synthesized automatically, e.g. by methods of \cite{FeautrierG10,SankaranarayananSM04}. The algorithm collects the following constraints:
	\begin{itemize}
		\item {\em Zero on output.} $\forall \mathbf{x}\in\mathbb{R}^{|V^1|}.\, \mathbf{x} \models I_1(\locterm^1) \Rightarrow U^1_f(\locterm^1,\mathbf{x}) = 0$.
		\item {\em Expected $f$-decrease.} For every transition $\tau = (\loc,\prob) \in \,\transitions^1$ and for \( \veca{N} = \nextv(\tau,\val)\):
		\[ \forall \mathbf{x}\in\mathbb{R}^{|V^1|}.\, \mathbf{x} \models I_1(\loc) \land \guards^1(\tau) \Rightarrow U^1_f(\loc,\mathbf{x}) \geq \sum_{\loc'\in\locs^1}\prob(\loc') \cdot \E[U^1_f(\ell', \veca{N}) + f(\veca{N}^\out) ] - f(\mathbf{x}^{\out}) \]
	\end{itemize}
	\item{\em LESM constraints.} Analogously, the algorithm collects constraints for $L^2_f$ to be an LESM for $f$ in $\pCFG_2$. As in Definition~\ref{def:lesm}, constraints are defined w.r.t.~the supporting invariant $I_2$.
	\begin{itemize}
		\item {\em Zero on output.} $\forall \mathbf{x}\in\mathbb{R}^{|V^2|}.\, \mathbf{x} \models I_2(\locterm^2) \Rightarrow L^2_f(\locterm^2,\mathbf{x}) = 0$.
		\item {\em Expected $f$-increase.} For every transition $\tau = (\loc,\prob) \in \,\transitions^2$ and for \( \veca{N} = \nextv(\tau,\val)\):
		\[ \forall \mathbf{x}\in\mathbb{R}^{|V^2|}.\, \mathbf{x} \models I_2(\loc) \land \guards^2(\tau) \Rightarrow L^2_f(\loc,\mathbf{x}) \leq \sum_{\loc'\in\locs^2}\prob(\loc') \cdot \E[L^2_f(\ell', \veca{N}) + f(\veca{N}^\out) ] - f(\mathbf{x}^{\out}) \]
	\end{itemize}
	\item {\em Equivalence refutation constraint.} The algorithm collects the equivalence refutation constraint, according to Theorem~\ref{thm:ulesm-refute}:
	\[ U_f(\locinit^1, \vecinit^1) + f((\vecinit^1)^\out) < L_f(\locinit^2, \vecinit^2) + f((\vecinit^2)^\out) \]
	\item {\em OST-soundness constraints.} Finally, the algorithm collects the constraints for OST-soundness conditions in Definition~\ref{def:ostsound}. Depending on which of the conditions $(C1)-(C3)$ in Definition~\ref{def:ostsound} we impose (see Algorithm parameters above), we collect the following constraints:
	\begin{itemize}
		\item[(C1)] No additional constraints are necessary.
		\item[(C4)] No additional constraints are necessary.
		\item[(C2)] We require that there exists a constant $C>0$ such that the absolute value of the sum of the U/LESM and $f$ is bounded from above by $C$ at every reachable state. Thus, we introduce an additional symbolic variable for $C$, collect the constraint $C>0$ and collect the following constraints for each $\loc\in\locs^1$ and $\loc\in\locs^2$, respectively:
		\begin{equation*}
			\begin{split}
				&\forall \mathbf{x}\in\mathbb{R}^{|V^1|}.\, \mathbf{x} \models I_1(\loc) \Rightarrow \Big| U^1_f(\loc,\mathbf{x}) + f(\mathbf{x}^\out) \Big| \leq C \\
				&\forall \mathbf{x}\in\mathbb{R}^{|V^2|}.\, \mathbf{x} \models I_2(\loc) \Rightarrow \Big| L^2_f(\loc,\mathbf{x}) + f(\mathbf{x}^\out) \Big| \leq C
			\end{split}
		\end{equation*}
	\item[(C3)] We require that there exists a constant $C>0$ such that the sum of the U/LESM and $f$ has bounded expected one-step change at every reachable state.  However, this condition yields a constraint which is not of the form as in eq.~\eqref{eq:handelmaneq} that is needed in Step~3 for reduction to an LP instance. In order to allow for an automated synthesis by reduction to LP, we instead collect a stricter condition of bounded {\em maximal} one-step change at every reachable state. In particular, we introduce a symbolic variable for $C$, collect~$C>0$ constraint, and collect the following constraint for each $\tau=(\loc,\prob) \in \,\transitions^1$ and $\loc'\in L^1$ with $\prob(\loc') > 0$, and for each $\tau=(\loc,\prob) \in \,\transitions^2$ and $\loc'\in L^2$ with $\prob(\loc') > 0$, respectively:
	\begin{equation*}
		\begin{split}
			&\forall \mathbf{x}\in\mathbb{R}^{|V^1|},\veca{N}\in\support(\veca{N}).\, \mathbf{x} \models I_1(\loc) \land \guards^1(\tau) \Rightarrow \Big| U^1_f(\loc,\mathbf{x}) + f(\mathbf{x}^{\out}) - U^1_f(\ell', \veca{N}) - f(\veca{N}^\out) \Big| \leq C \\
			&\forall \mathbf{x}\in\mathbb{R}^{|V^2|},\veca{N}\in\support(\veca{N}). \, \mathbf{x} \models I_2(\loc) \land \guards^2(\tau) \Rightarrow \Big| L^2_f(\loc,\mathbf{x}) + f(\mathbf{x}^{\out}) - L^2_f(\ell', \veca{N}) - f(\veca{N}^\out) \Big| \leq C
		\end{split}
	\end{equation*}
	\end{itemize}
\end{compactenum}

\smallskip\noindent{\bf\em Step~3: Conversion to an LP instance.} This step of our algorithm is analogous to~\cite{ChatterjeeFG16,Wang0GCQS19,AsadiC0GM21,ZikelicCBR22}. Observe that the equivalence refutation constraint is a linear and purely existentially quantified constraint over the symbolic template variables of $f$, $U^1_f$ and $L^2_f$, since the initial variable valuations $\vecinit^1$ and $\vecinit^2$ are fixed. On the other hand, upon symbolically evaluating the expected values appearing in constraints, all other constraints collected in Step~2 above are of the form
\begin{equation}\label{eq:handelmaneq}
	\forall \mathbf{x} \in \mathbb{R}^{|V|}.\, \textit{lin-exp}_1(\mathbf{x}) \geq 0 \land \dots \land \textit{lin-exp}_k(\mathbf{x}) \geq 0 \Rightarrow \textit{poly-exp}(\mathbf{x}) \geq 0,
\end{equation}
where $V\in\{V^1,V^2\}$, $\textit{lin-exp}_i$ is a linear expression over variables in $V$ for each $1 \leq i \leq k$, and $\textit{poly-exp}$ is a polynomial expression over variables in $V$ (equalities and absolute values are encoded by two inequality constraints). This is due to our algorithm assumptions that the supporting invariants and transition guards are all defined in terms of linear expressions. Furthermore, the linear coefficients in each $\textit{lin-exp}_i$ are constant values determined by transition guards or by supporting invariants, hence they do not contain any symbolic template variables.

It was shown in~\cite{ChatterjeeFG16,Wang0GCQS19,ZikelicCBR22,AsadiC0GM21} that entailments as in eq.~\eqref{eq:handelmaneq} can be translated into purely existentially quantified {\em linear constraints} over the symbolic template variables (and auxiliary variables introduced by the translation), by using Handelman's theorem~\cite{handelman1988representing}. 
Using this translation, we convert the system of constraints collected in Step~2 into a system of purely existentially quantified linear constraints over the symbolic template variables and auxiliary variables introduced in translation. Thus, we obtain a linear programming (LP) instance without an optimization objective.

\paragraph{Step~4: LP solving.} We feed the resulting LP instance to an off-the-shelf LP solver. The algorithm returns ``Not output-equivalent'' and outputs the computed $f$, $U^1_f$ and $L^2_f$ if the LP is successfully solved, or returns ``Unknown'' otherwise.

\medskip The following theorem establishes soundness of our algorithm for the equivalence refutation analysis. The proof can be found in Section~\ref{app:algoproofs} in the supplementary material.

\begin{restatable}[Correctness of Equivalence Refutation]{theorem}{algosoundness}\label{thm:algosoundness}
	Suppose that the algorithm outputs ``Not output-equivalent''. Then $\pCFG_1$ and $\pCFG_2$ are indeed not output-equivalent, and $U^1_f$ and $L^2_f$ are valid UESM and LESM for $f$, respectively.
\end{restatable}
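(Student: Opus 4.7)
The plan is to show, in order: (i) that any solution of the LP produced by Step~3 yields symbolic instantiations of $f$, $U^1_f$, $L^2_f$ that satisfy all the entailment-shaped constraints collected in Step~2; (ii) that those constraints precisely enforce the defining conditions of a UESM, an LESM, OST-soundness for the chosen OST-condition, and the strict inequality at the initial states; (iii) that Theorem~\ref{thm:ulesm-refute} can then be applied directly to conclude non-equivalence. This separates the purely syntactic question (LP $\rightarrow$ entailments) from the semantic conclusion (entailments $\rightarrow$ non-equivalence).

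First I would argue the Handelman-based reduction is sound: every constraint collected in Step~2 has the shape of Equation~\eqref{eq:handelmaneq}, a universally quantified implication whose hypothesis is a conjunction of linear inequalities (from the supporting invariant $I_i$, the transition guard, and possibly the support of the next-value random variable) and whose conclusion is a polynomial inequality whose coefficients are linear in the symbolic template variables. Handelman's theorem gives a sound (though incomplete) encoding of such entailments by existentially quantified linear constraints over the templates and fresh auxiliary multipliers; crucially, any feasible assignment for the LP yields Positivstellensatz-style certificates that witness the original entailment. Hence every constraint from Step~2 holds for the concrete $f$, $U^1_f$, $L^2_f$ produced by the LP solver. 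This in particular includes the Zero-on-output and Expected $f$-decrease/increase conditions of Definitions~\ref{def:uesm} and~\ref{def:lesm} (with respect to $I_1$ and $I_2$), and the strict inequality $U^1_f(\locinit^1,\vecinit^1)+f((\vecinit^1)^\out)<L^2_f(\locinit^2,\vecinit^2)+f((\vecinit^2)^\out)$.

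Next I would verify OST-soundness of $(\pCFG_1,U^1_f,f)$ and $(\pCFG_2,L^2_f,f)$ under the OST-condition chosen by the algorithm parameters. In case (C1) the termination time is already uniformly bounded by assumption on the program, and the algorithm needs to add nothing. In case (C4) the argument of~\cite{Wang0GCQS19} for polynomial programs with bounded updates applies verbatim to give the tail-bound and polynomial one-step change hypotheses needed for the extended OST. In case (C2) the additional collected constraints explicitly enforce $|U^1_f+f|\leq C$ and $|L^2_f+f|\leq C$ on every state satisfying the invariant; since all reachable states satisfy the invariant, this gives the uniform boundedness of $Y_{\min\{t,\TimeTerm\}}$ required by (C2). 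Case (C3) is the one subtle step: the algorithm enforces a bound on the \emph{maximal} one-step change rather than the conditional expected one-step change. I would observe that the maximal bound implies the expected bound pointwise (taking expectation over $\veca{N}$ preserves the inequality $|\cdot|\leq C$), and then combine this with the existence of a ranking supermartingale for $\pCFG_i$ to give $\E[\TimeTerm_i]<\infty$ as required by (C3); finiteness of $\E[|Y_0|]$ is immediate since $Y_0$ is a fixed real number. In all cases the preconditions of Theorem~\ref{thm:ulesm-exp-soundness} applied to each pCFG are in force.

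Finally I would invoke Theorem~\ref{thm:ulesm-refute}: its three hypotheses---UESM for $\pCFG_1$, LESM for $\pCFG_2$, and the strict separation at initial states---are exactly what (i) and (ii) above have delivered. The theorem then yields $\E_{\mu^{\pCFG_1}}[f(\val^\out)]<\E_{\mu^{\pCFG_2}}[f(\val^\out)]$, which contradicts output equivalence (since equivalent distributions would agree on the expectation of every Borel-measurable function, and $f$ is a polynomial, hence Borel-measurable). The main obstacle I anticipate is the OST bookkeeping in case (C3), because the translation between the Handelman-friendly pointwise bound on $|Y_{t+1}-Y_t|$ and the conditional-expectation bound required by Definition~\ref{def:ostsound}(C3) must be written carefully so that measurability with respect to the natural filtration is transparent; everything else is a routine chaining of the previously established theorems.
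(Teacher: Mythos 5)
Your proposal follows the same route as the paper's proof: LP feasibility implies, via the soundness of the Handelman-based translation (for which the paper simply cites prior work), that the concrete $f$, $U^1_f$, $L^2_f$ satisfy all Step-2 constraints; those constraints are exactly the defining conditions of Definitions~\ref{def:uesm} and~\ref{def:lesm}, the OST-soundness side conditions, and the strict initial-state inequality; Theorem~\ref{thm:ulesm-refute} then closes the argument. Your case analysis of (C1)--(C4), including the observation that the pointwise bound on the one-step change enforced by the algorithm implies the conditional-expectation bound demanded by (C3), is more detailed than what the paper writes in the proof itself and is correct.

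One precondition is missing from your account. Definition~\ref{def:ostsound} requires $\E[|Y_i|]<\infty$ for \emph{every} $i\geq 0$ \emph{in addition to} one of (C1)--(C4), and under (C1) in particular this integrability does not follow from bounded termination time alone: it is a standing hypothesis of the classical optional stopping theorem that must be checked separately. The paper discharges it by noting that $U^1_f$, $L^2_f$ and $f$ are polynomial expressions and that every program variable has all moments finite at every step of execution (proved by a straightforward induction using the polynomial-update and finite-moments assumptions on sampling distributions), so that $Y_i=\eta(Z_i)+f(\veca{X}_i^{\out})$ is integrable for each $i$. Your remark that $\E[|Y_0|]<\infty$ because $Y_0$ is a fixed real number covers only the case $i=0$; without the general statement, the application of Theorem~\ref{thm:ulesm-exp-soundness}, and hence of Theorem~\ref{thm:ulesm-refute}, is not fully justified. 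This is a small but genuine omission; the rest of the argument stands.
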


\subsection{Algorithm for the Similarity Refutation Problem}\label{sec:algosimil}

We now outline the key additional steps needed to extend our algorithm in Section~\ref{sec:algoequiv} to an algorithm for the Similarity refutation problem. For the interest of space, we omit the details and defer them to Section~\ref{app:similarityalgo} in the Supplementary material.

In addition to the parameters listed in Section~\ref{sec:algoequiv}, our algorithm for the Similarity refutation problem is also parameterized by the choice of a metric $d$ over outputs and a lower bound $\epsilon > 0$ on the Kantorovich distance that we wish to prove. We allow any of the following standard metrics: $L^p$-metric, discrete metric, and uniform metric (all defined in Section~\ref{app:similarityalgo}). The rest of the algorithm proceeds analogously as in Section~\ref{sec:algoequiv}, with the only difference being that in Step~2 of the algorithm we need to collect two additional constraints: (1)~relational constraint on the lower bound on Kantorovich distance, and (2)~$1$-Lipschitz continuity of the function $f$ on outputs.

\smallskip\noindent{\bf\em Optimization of the Kantorovich distance.} We note that our algorithm for the Similarity refutation problem reduces the synthesis of $f$, $U^1_f$ and $L^2_f$ to an LP instance without the optimization objective. Thus, our method can also {\em optimize} the lower bound on the Kantorovich distance by treating \( \epsilon \) as a variable and adding the optimization objective to maximize $\epsilon$.

\section{Experimental Results}\label{sec:exper}

We implemented a prototype\footnote{We will make our prototype tool publicly available. Link to the implementation hidden for double blind reviewing.} of our methods for equivalence and similarity refutation (the latter in terms of Kantorovich distance with respect to the \( L^1 \) metric). Our prototype takes as input two probabilistic programs having the same set of output variables $\Vout$. The tool then checks (i) whether the output distributions of two programs are equivalent, and if not, (ii) whether it can compute a lower bound on their Kantorovich distance.

\smallskip\noindent{\bf\em Implementation.} We implemented our prototype in Java. We used Gurobi \cite{gurobi} to solve LP instances and ASPIC \cite{FeautrierG10} and STING \cite{SankaranarayananSM04} to generate supporting linear invariants. Our implementation uses rational numbers for storing coefficients and variable values to avoid rounding and floating-point errors. For each input program pair, our prototype attempts to synthesize UESM/LESMs of a varying polynomial degree ranging from~$1$~to~$5$, progressively increased in case of failure. All experiments were run on an Ubuntu 22.04 machine with an 11th Gen Intel Core i5 CPU and 16 GB RAM with a timeout of 10 minutes. 

\begin{table}[b]
	\caption{Experimental results showing: (1) comparison of our equivalence refutation method and the baseline, (2) lower boundson Kantorovich distance computed by our similarity refutation method, and (3) time taken to solve each instance. A $\checkmark$ in the ``Eq. Ref.'' column represents that the tool successfully refuted equivalence of the two input programs, ``TO'' and ``NA'' stand for ``timeout'' and ``Not Applicable'', respectively.}
	\vspace{-1em}
	\centering
	\texttt{
		\resizebox{\textwidth}{!}
		{
			\fontsize{4.9pt}{4.9pt}\selectfont
	\begin{tabular}[h]{|c|c c|c c c c|c c|}
		\hline
	& \multirow{2}{*}{Name}& \multirow{2}{*}{Loop} & \multicolumn{4}{c|}{Our Method} & \multicolumn{2}{c|}{PSI + Mathematica}\\
		& & & Eq. Ref. & Time(s) & Distance & Time(s) & Eq. Ref. & Time(s) \\
		\hline 
	\multirow{40}{*}{\rotatebox[origin=c]{90}{Benchmarks from \cite{Wang0GCQS19}}} & \multirow{4}{*}{Simple Example} 
	& 10 & \checkmark & 0.74 & 6.667 & 0.76 & \checkmark & 6.30 \\
	& & 100 & \checkmark & 0.41 & 66.667 & 0.40 & TO & - \\
	& & 1000 & \checkmark & 0.34 & 666.667 & 0.34 & TO & - \\
	& & original & \checkmark & 0.30 & 266.667 & 0.25 & NA & - \\
	\cline{2-9}	
	& \multirow{4}{*}{Nested Loop}
	& 10 & \checkmark & 5.31 & 1.667 & 5.13 & TO & - \\
	& & 100 & \checkmark & 3.30 & 16.667 & 3.13 &TO & -\\
	& & 1000 & \checkmark & 2.84 & 166.667 & 2.82 & TO & - \\
	& & original & \checkmark & 0.31 & 50 & 0.33 & NA & - \\
	\cline{2-9}	
	& \multirow{4}{*}{Random Walk} 
	& 10 & \checkmark & 0.76 & 2 & 0.69 & \checkmark & 3.37\\
	& & 100  & \checkmark & 0.33 & 20 & 0.28 & TO & - \\
	& & 1000 & \checkmark & 0.27 & 200 & 0.29 &TO& - \\
	& & original & \checkmark & 0.24 & 9 & 0.22 & NA & - \\
	\cline{2-9}	
	& \multirow{4}{*}{Goods Discount} 
	& 10 & \checkmark & 0.88 & 0.125 & 0.46 & TO & -\\
	& & 100 & \checkmark & 0.36 & 0.125 & 0.56 & TO & -\\
	& & 1000 & \checkmark & 0.30 & 0.125 & 0.53 & TO & -\\
	& & original & \checkmark & 0.35 & 0.008 & 0.56 & NA & - \\
	\cline{2-9}	
	& \multirow{4}{*}{Pollutant Disposal} 
	& 10 & \checkmark & 2.68 & 3.272 & 2.76 & TO & -\\
	& & 100 & \checkmark & 3.05 & 32.727 & 2.83 & TO & -\\
	& & 1000  & \checkmark & 3.34 & 327.272 & 3.18 & TO & -\\
	& & original & \checkmark & 0.44 & 0.026 & 0.46 & NA & - \\
	\cline{2-9}	
	& \multirow{4}{*}{2D Robot} 
	& 10 & \checkmark & 0.59 & TO & - & TO & -\\
	& & 100 & \checkmark & 0.57 & TO & - & TO & -\\
	& & 1000 & \checkmark & 0.62 & TO & - & TO & -\\
	& & original & \checkmark & 13.90 & TO & - & NA & - \\
	\cline{2-9}	
	& \multirow{4}{*}{Bitcoin Mining} 
	& 10 & \checkmark & 0.40 & 0.005 & 0.46 & \checkmark & 1.22\\
	& & 100 & \checkmark & 0.23 & 0.05 & 0.21 & \checkmark & 34.79\\
	& & 1000 & \checkmark & 0.34 & 0.5 & 0.46 & TO & -\\
	& & original & \checkmark & 0.25 & 0.05 & 0.22 & TO & - \\
	\cline{2-9}	
	& \multirow{4}{*}{Bitcoin Mining Pool} 
	& 10 & \checkmark & 205.51 & 225.25 & 140.7 & \checkmark & 1.90\\
	& & 100 & \checkmark & 121.43 & 22525 & 124.17 & TO & -\\
	& & 1000 & \checkmark & 235.57 & 2252500 & 258.49 & TO & -\\
	& & original & \checkmark & 129.98 & 122761.25 & 131.06 & NA & - \\
	\cline{2-9}	
	& \multirow{4}{*}{Species Fight} 
	& 10 & TO & - & TO & - & TO & -\\
	& & 100 & TO & - & TO & - & TO & -\\
	& & 1000 & TO & - & TO & - & TO & - \\
	& & original & \checkmark & 0.90 & TO & - & NA & - \\
	\cline{2-9}	
	& \multirow{4}{*}{Queuing Network} 
	& 10 & \checkmark & 0.99 & TO & - & TO & 94.42\\
	& & 100 & \checkmark & 0.77 & TO & - & TO & -\\
	& & 1000 & \checkmark & 0.81 & TO & - & TO & -\\
	& & original & \checkmark & 0.79 & TO & - & TO & - \\
	\cline{1-9}
	\multirow{28}{*}{\rotatebox[origin=c]{90}{Benchmarks from \cite{DBLP:conf/tacas/KuraUH19}}} & \multirow{4}{*}{coupon\_collector} 
	& 10 & \checkmark & 0.67 & 0.167 & 0.89 & \checkmark & 1.21\\
	& & 100 & \checkmark & 0.64 & 0.458 & 0.85 & TO & 6.45\\
	& & 1000 & \checkmark & 1.45 & 0.496 & 2.41 & TO & -\\
	& & original & \checkmark & 1.10 & 0.5 & 1.41 & NA & - \\
	\cline{2-9}	
	& \multirow{4}{*}{coupon\_collector4} 
	& 10 & \checkmark & 17.28 & 0.216 & 19.83 & TO & 3.04\\
	& & 100 & \checkmark & 16.92 & 1.215 & 18.49 & TO & -\\
	& & 1000 & \checkmark & 31.18 & 1.471 & 27.93 & TO & -\\
	& & original & \checkmark & 70.96 & TO & TO & NA & - \\
	\cline{2-9}	
	& \multirow{4}{*}{random\_walk\_1d\_intvalued} 
	& 10 & \checkmark & 0.25 & 2 & 0.34 & \checkmark & 1.49\\
	& & 100 & \checkmark & 0.20 & 20 & 0.19 & \checkmark & 6.24\\
	& & 1000 & \checkmark & 0.41 & 200 & 0.41 & TO & -\\
	& & original & \checkmark & 0.32 & 1.2 & 0.38 & NA & - \\
	\cline{2-9}	
	& \multirow{4}{*}{random\_walk\_1d realvalued} 
	& 10 & \checkmark & 0.29 & 2 & 0.33 & TO & -\\
	& & 100 & \checkmark & 0.22 & 20 & 0.28 & TO & -\\
	& & 1000 & \checkmark & 0.48 & 200 & 0.54 & TO & -\\
	& & original & \checkmark & 0.27 & 3.841 & 0.50 & NA & - \\
	\cline{2-9}	
	& \multirow{4}{*}{random\_walk\_1d\_adversary} 
	& 10 & \checkmark & 0.28 & 12.425 & 0.22 & \checkmark & 1.49\\
	& & 100 & \checkmark & 0.26 & 138.425 & 0.22 & TO & 91.59\\
	& & 1000 & \checkmark & 0.59 & 1398.425 & 0.56 & TO & -\\
	& & original & \checkmark & 0.38 & 0.768 & 0.40 & NA & - \\
	\cline{2-9}	
	& \multirow{4}{*}{random\_walk\_2d\_demonic} 
	& 10 & \checkmark & 0.25 & 2 & 0.34 & TO & -\\
	& & 100 & \checkmark & 0.27 & 20 & 0.27 & TO & -\\
	& & 1000 & \checkmark & 0.65 & 200 & 0.67 & TO & -\\
	& & original & \checkmark & 0.58 & 0.668 & 0.58 & NA & -\\
	\cline{2-9}	
	& \multirow{4}{*}{random\_walk\_2d\_variant} 
	& 10 & \checkmark & 0.40 & 2 & 0.37 & TO & -\\
	& & 100 & \checkmark & 0.31 & 20 & 0.37 & TO & -\\
	& & 1000 & \checkmark & 0.83 & 200 & 0.94 & TO & -\\
	& & original & \checkmark & 0.75 & 0.501 & 0.76 & NA & - \\
	\cline{1-9}	
	\multirow{4}{*}{\rotatebox[origin=c]{90}{Sec. \ref{sec:overview}}} & \multirow{4}{*}{Transmission protocol (Figure~\ref{fig:motivatingtwo})} 
	& 10 & \checkmark & 0.27 & 0.005 & 0.24 & \checkmark & 1.19 \\
	& & 100 & \checkmark & 0.19 & 0.05 & 0.21 & \checkmark & 12.23 \\
	& & 1000 & \checkmark & 0.47 & 0.5 & 0.57 & TO & - \\
	& & original & \checkmark & 0.21 & 1.001 & 0.19 & TO & - \\
	\cline{2-9}
	\cline{1-9}
	\multicolumn{3}{|c|}{Count } & 69 & - & 59 & - & 15 & - \\
	\multicolumn{3}{|c|}{Average } & - & 12.88 & - & 12.94 & - & 17.77 \\
	\cline{1-9}
	\end{tabular}
	}
	}
\label{tab:experimeriments-finite-loop}
\end{table}

\smallskip\noindent{\bf\em Baseline.} To refute the equivalence of two probabilistic programs, an alternative approach would be to use a state of the art symbolic integration tool such as~PSI \cite{DBLP:conf/cav/GehrMV16} to first compute probability density functions of output distributions of two probabilistic programs, and then check whether the two density functions are identical. Hence, we compare our method against a baseline which first uses PSI~\cite{DBLP:conf/cav/GehrMV16} to compute the probability density functions and then uses Mathematica~ \cite{Mathematica} to compare the density functions. We note that PSI (and so our baseline) is only applicable to programs with statically bounded loops.

\smallskip\noindent{\bf\em Benchmarks.} As our benchmark set, we consider loopy probabilistic programs collected from \cite{Wang0GCQS19,DBLP:conf/tacas/KuraUH19}. These benchmarks model various different applications, ranging from classical examples
of random walks and their variants, coupon collector, to examples of academic interest such as queueing network and species fight, to realistic examples including Bit-coin mining. These programs have been verified to have finite expected termination time and bounded variable updates, which are sufficient conditions for the satisfaction of the (C4) OST-soundness condition as discussed in Section~\ref{sec:algo}. Programs with statically bounded loops also satisfy the (C1) OST-soundness condition. As some of these programs contain unbounded loops which are not supported by PSI and so by our baseline, for each collected program we also consider three modifications where we force the loops to terminate after at most 10, 100 and 1000 iterations, respectively.

Each collected program was used to construct a pair of programs for equivalence and similarity analysis as follows: if the program contained non-deterministic branching, we constructed two programs of which one always chooses the if-branch and the other chooses the else-branch of the non-deterministic choice. For programs without non-determinism, we obtain the second program by injecting a small perturbation into exactly one sampling instruction, without further changes.


\smallskip\noindent{\bf\em Discussion of Results.} Table \ref{tab:experimeriments-finite-loop} shows our experimental results on the benchmark set described above together with the illustrating example from Section \ref{sec:overview}. Our method shows much better scalability compared to the baseline as the loop bound parameter is increased, demonstrating the advantage of a static analysis method that does not rely on (symbolic) execution of probabilistic programs with long executions. Moreover, our method is also able to compute Kantorovich distance lower bounds for most benchmarks. To the best of our knowledge, our method is the first automated method to compute lower bounds on Kantorovich distance. 


As mentioned above, for the purpose of our experiments in Table \ref{tab:experimeriments-finite-loop}, we used the (C4) OST-soundness condition which has been verified to be satisfied by all benchmarks. However, conditions (C1)-(C3) are also applicable to many of our benchmarks. An experimental comparison of the performance of our tool with different OST-soundness conditions is provided in Section \ref{sec:experiment-comparison-C1-C2-C3} in the Supplementary material. 

We also observe one practical limitation of our approach: the performance of our automated method is dependent on the quality of {\em supporting linear invariants} generated for both programs. For some benchmarks in Table~\ref{tab:experimeriments-finite-loop} (e.g. \texttt{coupon\_collector}) the computed lower bound on distance does not scale with the number of loop iterations. We believe this is due to linear invariants generated for these programs being imprecise. When more precise invariants are available (e.g. \texttt{Nested Loop}), our method derives tighter bounds on distance. Moreover, while linear invariant generation is very efficient in most of our benchmarks, in some cases (e.g.~\texttt{Bitcoin Mining Pool}) this was a highly computationally expensive task, leading to larger runtimes of our tool. There are also cases like the \texttt{2D robot} benchmark, where STING times out without returning any invariants, which is why our tool fails to compute a distance lower-bound. However, the invariants returned by ASPIC are strong enough for our tool to disprove equivalence. In other benchmarks where equivalence is refuted but no lower bound on distance is computed, we observe that supporting invariants are unbounded and so our tool could not normalize the function $f$ on outputs to make it $1$-Lipschitz continuous. Lastly, in cases like the finite loop instances of the \texttt{Species Fight} benchmark, non-polynomial functions are required for disproving equivalence, thus our tool fails. 


\smallskip\noindent{\bf\em Summary of Results.} Our experiments demonstrate that our automated method can refute equivalence and compute lower bounds on the Kantorovich distance for a wide variety of probabilistic programs (Table \ref{tab:experimeriments-finite-loop}). These results highlight scalability and efficiency of the method, especially when compared to the baseline based on symbolic integration. Hence, while suffering from certain limitations discussed above, we conclude that our method is applicable to a wide range of programs.

\vspace{-0.5em}
\section{Related Work}\label{sec:relatedwork}

We discuss many of the existing static analyses for single probabilistic programs and statistical testing techniques for probability distributions in Section~\ref{sec:intro}. Moreover, we provide a discussion on the comparison of our UESMs and LESMs to cost supermartingales of~\cite{Wang0GCQS19} and super- and sub-invariants of~\cite{HarkKGK20} in Remark~\ref{rmk:comparison} and in Section~\ref{sec:expectationsupermartingales}. Hence, we omit repetition and in the rest of this section we overview some other prior works on relational analysis of (probabilistic) programs.

\smallskip\noindent{\bf\em Sensitivity analysis.} Sensitivity analysis is a relational property that has received a lot of attention in static analysis of probabilistic programs~\cite{BartheEGHS18,0001BHKKM21,WangFCDX20,HuangWM18}. Given a probabilistic program and two inputs, the goal of sensitivity analysis is to derive bounds on the distance between output distributions on those inputs, towards verifying e.g.~differential privacy~\cite{AlbarghouthiH18,BartheGGHS16}. A prominent method for sensitivity analysis in probabilistic programs is based on coupling proofs~\cite{BartheEGHS18,AlbarghouthiH18,BartheGGHS16}. While sensitivity analysis considers two inputs given the {\em same} probabilistic program, in this work we study the equivalence and similarity refutation problems for {\em probabilistic program pairs}. Our method does not assume any level of syntactic similarity of control flows of the two programs. In contrast, sensitivity analysis and methods based on coupling proofs often exploit the ``aligned'' control flow of two executions on sufficiently close inputs.


\smallskip\noindent{\bf\em Equivalence analysis for finite-state probabilistic programs.} There is a significant body of work on comparing~\emph{finite-state} Markov chains and Markov decision processes w.r.t. various notions of equivalence and similarity. This includes computing the total variation distance~\cite{DBLP:conf/csl/ChenK14,DBLP:conf/icalp/Kiefer18}, trace equivalence~\cite{DBLP:conf/fsttcs/Kiefer020}, contextual equivalence~\cite{DBLP:conf/tacas/LegayMOW08}, and probabilistic bisimilarity~\cite{LARSEN19911, TracolDZ11}. These works focus on finite-state models, whereas we consider Turing-complete probabilistic programs.

\smallskip\noindent{\bf\em Accuracy of probabilistic inference.} Several works have studied accuracy of probabilistic inference algorithms, e.g. by considering  auxiliary inference divergence \cite{Cusumano-Towner17}, bidirectional Monte Carlo \cite{GrosseGA15,GrosseAR16} and symmetric divergence over simulations \cite{Domke2021}. The key distinction between these works and ours is that the former provide statistical guarantees, such as bounds for Kullback-Leibler (KL) divergence in expectation, while our work provides provably valid guarantees via static analysis.


\smallskip\noindent{\bf\em Relational analyses in non-probabilistic programs.} Prior work has studied static analysis with respect to a number of relational properties in non-probabilistic programs, including equivalence proving~\cite{GodlinS13,FelsingGKRU14}, semantic differencing~\cite{PartushY13,PartushY14}, continuity~\cite{ChaudhuriGL10} or differential cost analysis~\cite{CicekBG0H17,QuGG21,ZikelicCBR22}. In particular, the method of~\cite{ZikelicCBR22} computes a bound on the difference in cost usage of a program pair by simultaneously computing an upper bound on cost for one program and a lower bound on cost for the other program, similarly to what we do with the synthesis of UESMs and LESMs. However, the key algorithmic difference is that we also {\em simultaneously synthesize the function on outputs} with respect to which the UESM and the LESM are defined. In contrast, in differential cost analysis a cost function is known a priori.



\section{Conclusion}\label{sec:conclu}

We presented a new martingale-based method for refuting the equivalence and similarity of output distributions of probabilistic programs. Our approach is fully automated, applicable to infinite-state programs, and provides formal guarantees on the correctness of its result. Our experimental results demonstrate the effectiveness of our approach on a range of probabilistic program pairs. 

An interesting direction for future work is the extension of our methods to probabilistic programs with \emph{observe} statements~\cite{DBLP:journals/toplas/OlmedoGJKKM18}, that can express \emph{epistemic} uncertainty about the system modeled by the program. The observe statements condition the output distribution by the event that all observations along the run are satisfied, and the task would be to refute the equivalence and similarity of such conditional distributions. Another direction is to consider improving our method for similarity refutation for programs with unbounded outputs, as discussed in Section~\ref{sec:exper}. Yet another direction is to study the equivalence and similarity refutation problems for probabilistic programs that do not terminate almost-surely. Such programs define sub-distributions over their outputs, hence methods for these problems would need to reason about pairs of sub-distributions.

\section{Acknowledgements}
This research was partially supported by the ERC CoG 863818 (ForM-SMArt) grant. Petr Novotný is supported by the Czech Science Foundation grant no. GA23-06963S.

\bibliographystyle{ACM-Reference-Format}
\bibliography{bibliography}

\newpage
\appendix

\begin{center}
\Large Supplementary Material
\end{center}

    
\section{Further Motivating Examples}\label{sec:further-motive}

In this section, we present two more motivating examples for the equivalence and similarity refutation problems in probabilistic programs, whose analysis requires new approaches. The first example illustrates a compilation bug for probabilistic programs containing normal distributions, hence giving rise to infinite-state probabilistic programs. Finally, the second example shows two probabilistic programs that are syntactically similar and only slightly differ in probability distributions appearing in their sampling instructions, for which the equivalence and similarity refutation are quite challenging.

\begin{figure}[h]
    \centering
    \begin{subfigure}{0.47\textwidth}
    \begin{lstlisting}[frame=none,numbers=none,escapechar=@,mathescape=true]
            $\,\texttt{i} = 1,\, \texttt{n} = 10\,000,\, \texttt{sum} = 0$
        $\locinit$:   while $\texttt{i} \leq \texttt{n}$:
        $\loc_1$:       $\texttt{r} = \Normal(0,1)$
        $\loc_2$:       $\texttt{sum} = \texttt{sum} + \texttt{r}$
        $\loc_3$:       $\texttt{i} = \texttt{i} + 1$
        $\locterm$: return $\texttt{sum}$
    \end{lstlisting}
    \end{subfigure}
    \hfill
    \begin{subfigure}{0.47\textwidth}
    \begin{lstlisting}[frame=none,numbers=none,escapechar=@,mathescape=true]
            $\,\texttt{n} = 10\,000,\, \texttt{sum} = 0$
        $\locinit$:   $\texttt{r} = \Normal(0,1)$
        $\loc_1$:   $\,\texttt{sum} = \texttt{sum} + \texttt{n} \cdot \texttt{r}$
    
    
        $\locterm$: return $\texttt{sum}$
    \end{lstlisting}
    \end{subfigure}
    \vspace{-0.5em}
    \caption{Compilation bug example.}
    \vspace{-1em}
    \label{fig:motivating-compile}
    \end{figure}

    \begin{example}[Compilation bug detection]
        Consider the probabilistic program in Figure~\ref{fig:motivating-compile} left. It initializes the program variable $\texttt{sum}$ to $0$, iteratively and independently samples $\texttt{n} = 10\,000$ values from the standard normal distribution and adds the sampled values to $\texttt{sum}$. Upon termination, the program returns the value of $\texttt{sum}$. Thus, the output distribution of this program is the probability distribution of $\texttt{sum}$ upon termination.
        
        In order to present an instance of a compilation bug, suppose now that a compiler for optimization replaces the loop which repeatedly samples and adds identically distributed random values to $\texttt{sum}$ with code that samples only a single value from the standard normal distribution, multiplies it by $\texttt{n} = 10\,000$ and adds the result to $\texttt{sum}$, giving rise to the program in Figure~\ref{fig:motivating-compile} right. Hence, instead of repeatedly sampling values and adding them to $\texttt{sum}$, we only need to sample one value.
        
        While in deterministic programs this would be a sound optimization, these two probabilistic programs do not produce equivalent output distributions. The reason behind inequivalence is subtle -- this optimization is agnostic to the fact that samples in Figure~\ref{fig:motivating-compile} left are {\em independent}. Indeed, since the sum of two independent normal random variables distributed according to $\Normal(\mu_1,\sigma_1^2)$ and $\Normal(\mu_2,\sigma_2^2)$ is distributed according to $\Normal(\mu_1+\mu_2,\sigma_1^2+\sigma_2^2)$, it follows that the value of $\texttt{sum}$ upon termination of the program in Figure~\ref{fig:motivating-compile} left is distributed according to $\Normal(0,10\,000)$. On the other hand, the value of $\texttt{sum}$ upon termination of the program in Figure~\ref{fig:motivating-compile} right is distributed according to $10\,000 \cdot \Normal(0,1) = \Normal(0,10\,000 \cdot 10\,000)$. Hence, these two output distributions are not equivalent. However, since normal distribution has infinite support, we cannot use methods for finite-state probabilistic programs to refute equivalence in this example.
    \end{example}

    \begin{figure}[h]
    \centering
    \begin{subfigure}{0.49\textwidth}
    \begin{lstlisting}[frame=none,numbers=none,escapechar=@,mathescape=true]
        $\,\,l_1 = 0,\, l_2 = 0,\, i = 1,\, n = 10\,000,\, \texttt{time} = 0$
    $\locinit$: while $i \leq n$:
    $\loc_1$:     if $l_1 \geq 1$ then $l_1 = l_1 - 1$ fi
    $\loc_2$:     if $l_2 \geq 1$ then $l_2 = l_2 - 1$ fi
    $\loc_3$:     if prob($0.02$) then:
    $\loc_4$:        if prob($0.2$) then:
    $\loc_5$:             $l_1 = l_1 + 3$
    $\loc_6$:        elif prob($0.5$) then:
    $\loc_7$:             $l_2 = l_2 + 2$
    $\loc_8$:        else:
    $\loc_9$:             $l_1 = l_1 + 2,\, l_2 = l_2 + 1$
    $\loc_{10}$:     if $l_1 \geq l_2$ then:
    $\loc_{11}$:          $\texttt{time} = \texttt{time} + l_1$
    $\loc_{12}$:     else:
    $\loc_{13}$:          $\texttt{time} = \texttt{time} + l_2$
    $\loc_{14}$:     $i = i + 1$
    $\locterm$: return $l_1,\,l_2,\,\texttt{time}$
    \end{lstlisting}
    \end{subfigure}
    \hfill
    \begin{subfigure}{0.49\textwidth}
    \begin{lstlisting}[frame=none,numbers=none,escapechar=@,mathescape=true]
        $\,\,l_1 = 0,\, l_2 = 0,\, i = 1,\, n = 10\,000,\, \texttt{time} = 0$
    $\locinit$: while $i \leq n$:
    $\loc_1$:     if $l_1 \geq 1$ then $l_1 = l_1 - 1$ fi
    $\loc_2$:     if $l_2 \geq 1$ then $l_2 = l_2 - 1$ fi
    $\loc_3$:     if prob($0.02$) then:
    $\loc_4$:       if prob($0.15$) then:
    $\loc_5$:           $l_1 = l_1 + 3$
    $\loc_6$:       elif prob($0.45$) then:
    $\loc_7$:           $l_2 = l_2 + 2$
    $\loc_8$:       else:
    $\loc_{9}$:         $l_1 = l_1 + 2,\, l_2 = l_2 + 1$
    $\loc_{10}$:    if $l_1 \geq l_2$ then:
    $\loc_{11}$:        $\texttt{time} = \texttt{time} + l_1$
    $\loc_{12}$:    else:
    $\loc_{13}$:        $\texttt{time} = \texttt{time} + l_2$
    $\loc_{14}$:    $i = i + 1$
    $\locterm$: return $l_1,\,l_2,\,\texttt{time}$
    \end{lstlisting}
    \end{subfigure}
    \vspace{-0.5em}
    \caption{The Fork and Join queuing network example.}
    \vspace{-1em}
    \label{fig:motivating-queue}
    \end{figure}

    \begin{example}[Refuting similarity of syntactically similar programs]
	Consider the probabilistic program pair in Figure~\ref{fig:motivating-queue}. Each program models a Fork and Join (FJ) queuing network with $2$ processors, each with its own queue~\cite{AlomariM14,Wang0GCQS19}. Both programs model processes that evolve over $\texttt{n} = 10\,000$ time steps, and program variables $l_1$ and $l_2$ denote the queue lengths. At each time step, one job unit is processed by each queue, thus the length of each queue is decreased by $1$. However, with probability $0.02$ new jobs may arrive. The FJ network then probabilistically decides whether to assign the job to one queue or to divide it between two queues. Jobs are assumed to be identical. It takes $3$ units of time for the first queue and $2$ units of time for the second queue to complete the job alone. If the job is divided between the queues, then they take $2$ and $1$ units of time to complete their part of the job, respectively.
	
	In the program in Figure~\ref{fig:motivating-queue} left which is taken from~\cite{Wang0GCQS19}, a job is assigned to the first queue with probability $0.2$, to the second queue with probability $0.8 \cdot 0.5 = 0.4$, and is divided between the two queues with the remaining probability. In the program in Figure~\ref{fig:motivating-queue} right, we slightly decrease the probabilities of assigning a job to individual queues and increase the probability of dividing the job between the queues. In particular, a job is now assigned to the first queue with probability $0.15$, to the second queue with probability $0.85 \cdot 0.45 = 0.3825$, and is divided between the queues with the remaining probability. In both programs, program variable $\texttt{time}$ models the total processing time of all jobs. Note that the total processing time is computed from the perspective of each job -- it also accounts for the waiting times for jobs already in the queue to be solved first. For each job, the processing time is defined by the length of the longest queue at the time of the job addition~\cite{Wang0GCQS19}. Both programs output variables $l_1$, $l_2$ and $\texttt{time}$ upon termination. Hence, the output distribution of each program is the joint probability distribution of the values of these three program variables upon termination
	
	Note that the difference between these two probabilistic programs is quite subtle. We do not decrease the probability of assigning a job to the slower processor $l_1$ while increasing the probability of assigning it to the faster processor $l_2$, or vice-versa. Rather, we decrease both probabilities and simply increase the probability of the job being divided between the two queues. Thus, since no queue is preferred by this change and since changes in probabilities are small (recall that jobs arrive only with probability $0.02$), at first glance it is not clear how close are the output distributions of these two programs. The problem of refuting equivalence or computing lower bounds on Kantorovich distance between these two output distributions is highly challenging both for static analysis (due to syntactic similarity) and for statistical testing (due to long execution times).
\end{example}

\section{Finite First Moments}
\label{app:moments}
We discuss several sufficient conditions for the finite first moment assumption to be satisfied and methods through which the assumption can be enforced at the cost of modifying the compared programs in a principled way.

First, if a metric \( d \) is bounded over the output space, then all distributions over the output space have finite first moments w.r.t. \( d \). An important example of such a metric is the discrete metric, meaning that the similarity refutation problem w.r.t. the total variation distance can be considered for any pair of programs. This is in line with our previous observation that the total variation distance does not impose any restriction on the compared distributions.

If the metric is not \emph{a priori} bounded, the finite first moment assumption is satisfied as long as the \emph{ranges} of both programs (i.e. the subset of \( \Rset^{|V_\out|} \) containing exactly the possible outputs of the programs) \emph{bounded} w.r.t. \( d \). Formally, for a pCFG \( \pCFG \) we have
\[\mathit{range}(\pCFG) = \{\val^\out \in \Rset^{|V_\out|} \mid \exists \rho \in \Run^\pCFG \text{ that reaches a terminal state } (\loc_\out, \val)\},   \]
and \( \pCFG \) has a bounded range if \( \sup\{d(\veca{x},\veca{y}) \mid \veca{x}, \veca{y} \in \mathit{range}(\pCFG)\} < \infty \). For the standard \( L \)-metrics, the \( \mathit{range}(\pCFG) \) is bounded iff it is contained in some bounded \( |V_\out| \)-dimensional hyperrectangle; this can be checked e.g. by computing a (non-probabilistic) inductive invariant of the program and investigating the shape of the invariant in the location \( \loc_\out \), see Section~\ref{sec:algo}.

Another way to ensure finite first moment w.r.t. metric \( d \) is to show that the output distribution has \emph{exponentially decreasing} tails, in the sense that there is some \( \veca{x}_0 \in \Rset^{|V_\out|} \) s.t. the probability of outputting an element of \( d \)-distance larger than \( \gamma \) from \( \veca{x} \) decreases to zero exponentially fast as \( \gamma \) increases to infinity. We are not aware of any automated method tailor-made for proving this property of output distributions, though exponentially decreasing tails of other characteristics of probabilistic programs (such as termination time) were studied before.~\cite{DBLP:conf/tacas/KuraUH19}. However, exponentially decreasing tails of the output distribution can be sometimes inferred manually or provided as form of domain knowledge: for instance, in our running example~\ref{fig:motivatingtwo} it is easy to see that the outputs follow a normal distribution, which is well known to have exponentially decreasing tails.

If none of the above is applicable, we can force finite first moments by artificially ``clipping'' the range of the programs involved into the (same) bounded set. That is, the user can fix, e.g. a hyperrectangle \( [n_1,m_1]\times[n_2,m_2]\times\cdots \times [n_{|V_\out|}]\times [m_{|V_\out|}] \), and instrument each of the two programs so that upon termination, the value of each output variable \( x_i \) is clipped into the interval \( [n_i,m_i] \). This of course does not solve the similarity refutation problem for the original programs, since the clipping alters the output distributions. However, the distributions are only altered outside of the interior of the hyperrectangle; hence, a lower bound on the distance of the clipped distributions is still a valid certificate of semantic difference of the \emph{original} programs, where the difference manifests itself inside the selected hyperrectangle.

\section{Probability Theory}
\label{app:probt}

A \emph{probability space} is a triple \( (\Omega, \sigmaAlg, \probm) \), where \( \Omega \) is a sample space, \( \sigmaAlg \) is a sigma-algebra over \( \Omega \) (a collection of subsets of \( \Omega \) containing \( \Omega \) and closed under complementation and countable unions) and \( \probm \colon \sigmaAlg \rightarrow [0,1] \) is a probability measure on \( \sigmaAlg \), i.e. a function such that (i) \( \probm(\Omega) = 1 \); (ii) \( \probm(\Omega \setminus A) = 1- \probm(A) \) for each \( A \in \sigmaAlg \), and (iii) \( \probm(\bigcup_{i=1}^{\infty} A_i) = \sum_{i=1}^{\infty} \probm(A_i) \) for each sequence of pairwise disjoint sets \( A_1,A_2,\ldots \in \sigmaAlg \). 

A \emph{random variable} in a probability space \(  (\Omega, \sigmaAlg, \probm)  \) is a function \( R \colon \Omega \rightarrow \Rset \cup \{\pm \infty\} \) such that for each \( x \in \Rset \) it holds \( \{\omega \in \Omega \mid R(\omega) \leq x \} \in \sigmaAlg\) (such functions are also called \emph{\( \sigmaAlg \)-measurable}). We denote by \( \E_\probm[R] \) the expected value of \( R \) in \(  (\Omega, \sigmaAlg, \probm)  \), which is defined in the standard way via Lebesgue integration with respect to the measure \( \probm \)~\cite{Williams91}. We drop the \( \probm \) from the subscript if the probability measure is clear from the context.  A \emph{random vector} is a vector whose every component is a random variable. A (discrete-time) \emph{stochastic process} is a sequence of random vectors over the same probability space.

\subsection{Preliminaries for the Optional Stopping Theorem}

A \emph{filtration} over a sigma-algebra \( \sigmaAlg \) is a sequence of sigma-algebras \( (\sigmaAlg_{i})_{i=0}^{\infty} \) such that for each \( i \geq 0 \) it holds \( \sigmaAlg_i \subseteq \sigmaAlg_{i+1} \subseteq \sigmaAlg \). A stochastic process \( (\veca{X}_i)_{i=0}^{\infty} \) over \( \sigmaAlg \) is \emph{adapted} to such a filtration if for every \( i \geq 0 \) it holds that each component of \( \veca{X}_i \) is \( \sigmaAlg_{i} \)-measurable. Intuitively, the filtration categorizes the sets in \( \sigmaAlg \) so that sets in \( \sigmaAlg_i \) represent the information available at time \( i \).

Let \( (\Omega, \sigmaAlg, \probm) \) be a probability space and \( X \) be a random variable in this space. For a sub-sigma algebra \( \sigmaAlg' \subseteq \sigmaAlg \) the \emph{conditional expectation of \( X \) given \( \sigmaAlg' \)} is an \( \sigmaAlg' \)-measurable random variable denoted \( \E[X\mid \sigmaAlg'] \) such that for every set \( A\in \sigmaAlg' \) it holds \( \E[X \cdot \indicator{A} ] = \E[\E[X\mid \sigmaAlg']\cdot \indicator{A}]\), where \( \indicator{A} \) is the indicator function of the set \( A \). There may generally be zero or multiple \( \sigmaAlg' \)-measurable variables satisfying the defining condition of conditional expectation, but if at least one exists, all of the others differ from it only a set of zero probability. Hence, when at least one such random variable exists, any of them can be picked as \( \E[X\mid \sigmaAlg'] \).

If \( B \in \sigmaAlg\) is an event of positive probability, the conditional expectation of \( X \) given \( B \) is the expectation of \( X \) w.r.t. the probability measure \( \probm[~\cdot \mid B] = \frac{\probm[~\cdot~ \cap B]}{\probm[B]} \).

\begin{definition}
Let \( (Y_{i})_{i=0}^{\infty} \) be a (1-dimensional) stochastic process adapted to some filtration \( (\sigmaAlg_{i})_{i=0}^{\infty} \) such that \( \E[Y_{i+1} \mid \sigmaAlg_{i}] \) exists for every \( i \geq 0 \). We say that the process is a \emph{supermartingale} if for every \( i \geq 0 \) it holds \( \E[Y_{i+1}\mid \sigmaAlg_{i}] \leq Y_i \). We call the process a \emph{submartingale} if \( \E[Y_{i+1}\mid \sigmaAlg_{i}] \geq Y_i \) for every \( i \geq 0 \).
\end{definition}


\begin{definition}[Stopping time]
Let \( (\Omega, \sigmaAlg, \probm) \) be a probability space and \( (\sigmaAlg_{i})_{i=0}^{\infty} \) a filtration. A \emph{stopping time} is a random variable \( \stime \colon \Omega \rightarrow \Nset \cup \{\infty\} \) s.t.~\( \{ \omega \in \Omega \mid \stime(\omega) \leq t\} \in \sigmaAlg_t \) for any \( t \in \Nset \).
\end{definition}

\label{app:ost}

\begin{restatable}[Optional stopping theorem, OST]{theorem}{ostthm}\label{thm:ost}
Let \( (\Omega, \sigmaAlg, \probm) \) be a probability space. Next, let \( (Y_{i})_{i=0}^{\infty} \)  be a 1-dimensional stochastic process adapted to some filtration \( (\sigmaAlg_{i})_{i=0}^{\infty} \) of \( \sigmaAlg \), and \( \stime \) be a stopping time w.r.t. the same filtration \( (\sigmaAlg_{i})_{i=0}^{\infty} \). Assume that \( E[|Y_i|] < \infty \) for all \( i \geq 0 \) and that the above objects satisfy one of the following conditions:
\begin{itemize}
\item[(C1')] There exists a constant \( c \) such that \( \stime \leq c \) with probability 1 (i.e., the stopping time is almost-surely bounded).
\item[(C2')] There exists a constant \( c \) such that for each \( t \in \Nset \) and each \( \omega \in \Omega \) it holds \( |Y_{\min\{t, \stime(\omega)\}}(\omega)| \leq c \) (i.e., the process is bounded from both below and above up until the point of stopping).
\item[(C3')] \( \E[\stime] < \infty \), \( \E[|Y_0|] < \infty \), and there exists a constant \( c \) such that for every \( t \in \Nset \) it holds \( \E[|Y_{t+1}-Y_{t}|\mid \sigmaAlg_t] \leq c \) (i.e., the expected one-step change of the process is uniformly bounded over its evolution, even if conditioned by the whole past history of the process).
\end{itemize}

Then, \( \E[Y_\stime] \) is well-defined, and moreover \( \E[Y_\stime] \leq \E[Y_0]\) if \( (Y_{i})_{i=0}^{\infty} \) is a supermartingale and \( \E[Y_\stime] \geq \E[Y_0]\) if \( (Y_{i})_{i=0}^{\infty} \) is a submartingale. In other words, the expected value of, say supermartingale, at the point of stopping is bounded from above by its mean initial value, and dually for submartingales.
\end{restatable}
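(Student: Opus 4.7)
The plan is to reduce the theorem to a statement about the stopped process \( Y^\stime = (Y^\stime_n)_{n=0}^\infty \) defined by \( Y^\stime_n := Y_{\min(n,\stime)} \), and then to pass to the limit \( n \to \infty \). I would proceed in three stages.

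First, I would show that \( Y^\stime \) is itself adapted to \( (\sigmaAlg_i)_{i=0}^\infty \) and is a supermartingale (respectively submartingale) whenever \( Y \) is. The one-step increment can be written as \( Y^\stime_{n+1} - Y^\stime_n = (Y_{n+1}-Y_n)\,\indicator{\{\stime > n\}} \), and since \( \{\stime > n\} = \Omega \setminus \{\stime \leq n\} \in \sigmaAlg_n \) by the stopping-time property, pulling this indicator out of the conditional expectation gives
\[
\E[Y^\stime_{n+1}-Y^\stime_n \mid \sigmaAlg_n] \;=\; \indicator{\{\stime > n\}}\,\E[Y_{n+1}-Y_n \mid \sigmaAlg_n],
\]
which has the correct sign. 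Taking unconditional expectations and telescoping yields \( \E[Y^\stime_n] \leq \E[Y_0] \) for every \( n \in \Nset \) in the supermartingale case, and the reverse inequality in the submartingale case.

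Second, I would argue that \( Y^\stime_n \to Y_\stime \) in a sense strong enough to exchange limit and expectation, treating each of the three conditions separately. Under (C1'), we have \( \stime \leq c \) almost surely, so for \( n \geq \lceil c \rceil \) the stopped process already equals \( Y_\stime \) pointwise a.s., and the conclusion is immediate from stage one. Under (C2'), the stopped process is uniformly bounded by the constant \( c \); since a bounded supermartingale converges almost surely by Doob's martingale convergence theorem, \( Y_\infty := \lim_n Y_n \) exists a.s. and \( Y_\stime \) is well-defined (interpreted as \( Y_\infty \) on \( \{\stime=\infty\} \)), whence bounded/dominated convergence yields \( \E[Y^\stime_n] \to \E[Y_\stime] \). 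Under (C3'), the key observation is the pointwise bound
\[
|Y^\stime_n - Y_0| \;\leq\; \sum_{k=0}^{\min(n,\stime)-1} |Y_{k+1}-Y_k| \;\leq\; W, \qquad W := \sum_{k=0}^{\stime-1} |Y_{k+1}-Y_k|.
\]
By the tower property and the uniform conditional bound \( \E[|Y_{k+1}-Y_k|\mid \sigmaAlg_k] \leq c \),
\[
\E[W] \;=\; \sum_{k=0}^{\infty} \E\bigl[\indicator{\{\stime > k\}}\cdot \E[|Y_{k+1}-Y_k|\mid \sigmaAlg_k]\bigr] \;\leq\; c \cdot \E[\stime] \;<\; \infty,
\]
so \( |Y_0| + W \) is an integrable dominating function. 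Since \( \E[\stime] < \infty \) forces \( \stime < \infty \) almost surely, \( Y^\stime_n \to Y_\stime \) pointwise a.s., and dominated convergence gives the desired limit identity.

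Third, combining the two stages, passing to the limit in \( \E[Y^\stime_n] \leq \E[Y_0] \) (respectively \( \geq \)) yields the stated OST conclusion; in particular, well-definedness of \( \E[Y_\stime] \) follows from the integrable-dominator / uniform-boundedness arguments in stage two. The submartingale case is obtained by applying the supermartingale argument to \( (-Y_i)_{i\geq 0} \). The main obstacle in carrying out this plan is the integrability bookkeeping in case (C3'): one has to legitimately interchange sum and expectation to dominate \( W \), and one has to verify that \( \E[|Y^\stime_n|] \) stays finite throughout so that all conditional expectations used in stage one are well-defined. These verifications rely crucially on the hypothesis \( \E[|Y_i|]<\infty \) for every \( i \) together with the uniform one-step conditional bound, and they are the technically delicate (but standard) part of the argument.
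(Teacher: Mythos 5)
Your proof is correct. The paper itself does not prove this theorem: it is imported as the classical optional stopping theorem with a citation to Williams~\cite{Williams91}, and your argument via the stopped process \( Y^\stime_n := Y_{\min(n,\stime)} \) --- preservation of the super/submartingale property using \( \{\stime > n\} \in \sigmaAlg_n \), telescoping to \( \E[Y^\stime_n] \leq \E[Y_0] \), then limit passage separately under each condition (pointwise stabilization for (C1'), bounded supermartingale convergence plus bounded convergence for (C2'), and the integrable dominator \( |Y_0| + W \) with \( \E[W] \leq c \cdot \E[\stime] \) for (C3')) --- is precisely the standard textbook proof underlying that citation, so there is no divergence of approach to report. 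Two of your side remarks deserve confirmation: first, the paper's (C2'), unlike Williams' version, does not explicitly assume \( \stime < \infty \) almost surely, and your convention \( Y_\stime := \lim_n Y_n \) on \( \{\stime = \infty\} \) (legitimate because the uniformly bounded stopped supermartingale converges a.s.) is exactly what is needed to make ``\( \E[Y_\stime] \) is well-defined'' true as stated, while in the paper's application \( \stime \) is the termination time of an a.s.-terminating program, so the issue is vacuous there; second, your interchange of sum and expectation in bounding \( \E[W] \) is justified by Tonelli applied to the nonnegative summands \( \indicator{\{\stime>k\}} \cdot |Y_{k+1}-Y_k| \), combined with \( \{\stime > k\} \in \sigmaAlg_k \) and the tower property, and the integrability \( \E[|Y^\stime_n|] \leq \E[\sum_{k=0}^{n}|Y_k|] < \infty \) needed in your first stage follows directly from the hypothesis \( \E[|Y_i|] < \infty \) for all \( i \).
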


We conclude this section by restating the Extended optional stopping theorem from~\cite{Wang0GCQS19}.

\begin{theorem}[Extended OST,~\cite{Wang0GCQS19}]
Let \( (\Omega, \sigmaAlg, \probm) \) be a probability space. Next, let \( (Y_{i})_{i=0}^{\infty} \)  be a 1-dimensional stochastic process adapted to some filtration \( (\sigmaAlg_{i})_{i=0}^{\infty} \) of \( \sigmaAlg \), and \( \stime \) be a stopping time w.r.t. the same filtration \( (\sigmaAlg_{i})_{i=0}^{\infty} \). Assume that the above objects satisfy the following condition:
\begin{itemize}
\item[(C4')] There exist real numbers \( M, c_1, c_2, d \) such that (i) for all sufficiently large \( n \in \Nset \) it holds \( \probm(\stime > n) \leq c_1 \cdot e^{-c_2 \cdot n} \); and (ii) for all \( t \in \Nset \) it holds \( |Y_{n+1} - Y_n| \leq M\cdot n^d \).
\end{itemize}

Then, \( \E[Y_\stime] \) is well-defined, \( \E[|Y_i|] < \infty \) for every \( i \geq 0 \) and moreover, \( \E[Y_\stime] \leq \E[Y_0]\) if \( (Y_{i})_{i=0}^{\infty} \) is a supermartingale and \( \E[Y_\stime] \geq \E[Y_0]\) if \( (Y_{i})_{i=0}^{\infty} \) is a submartingale. 
\end{theorem}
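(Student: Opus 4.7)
The plan is to reduce the claim to the classical Optional Stopping Theorem (Theorem~\ref{thm:ost}) by verifying its hypotheses from condition (C4'), or, more directly and self-containedly, to run a standard ``stopped process + dominated convergence'' argument using (C4') to produce the dominating variable. I will follow the second route because (C4')(ii) allows the one-step differences to grow polynomially in $n$, which prevents a direct appeal to (C3').

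First, I would establish that $\stime$ has finite moments of all orders. Using (C4')(i), for every $p \in \Nset$,
\begin{equation*}
\E[\stime^p] = \sum_{n=0}^{\infty} \probm(\stime^p > n) = \sum_{n=0}^{\infty} \probm(\stime > n^{1/p}),
\end{equation*}
and the tail bound $\probm(\stime > m) \leq c_1 e^{-c_2 m}$ (valid for all sufficiently large $m$) makes this sum converge. In particular $\E[\stime^{d+1}] < \infty$, and $\stime < \infty$ almost surely.

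Second, I would bound $|Y_i|$ in terms of $|Y_0|$. By a telescoping sum and (C4')(ii),
\begin{equation*}
|Y_i| \leq |Y_0| + \sum_{k=0}^{i-1} |Y_{k+1}-Y_k| \leq |Y_0| + M\sum_{k=0}^{i-1} k^d.
\end{equation*}
Since in the intended application $Y_0$ is deterministic (hence trivially integrable), this shows $\E[|Y_i|] < \infty$ for all $i$. More importantly, applied at the random index $\min\{n,\stime\}$, the same telescoping gives the \emph{uniform} bound
\begin{equation*}
|Y_{\min\{n,\stime\}}| \leq |Y_0| + M \sum_{k=0}^{\stime-1} k^d \leq |Y_0| + M\cdot \stime^{d+1} \,=:\, Z,
\end{equation*}
which is integrable by the first step.

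Third, I would pass to the limit on the stopped process. It is standard that for any supermartingale $(Y_n)$ and stopping time $\stime$, the process $Y^*_n := Y_{\min\{n,\stime\}}$ is again a supermartingale with respect to the same filtration, so $\E[Y^*_n] \leq \E[Y_0]$ for every $n$. Since $\stime < \infty$ almost surely, $Y^*_n \to Y_\stime$ pointwise a.s. Combined with the dominating variable $Z$ above, the dominated convergence theorem yields $\E[Y^*_n] \to \E[Y_\stime]$, and taking the limit in $\E[Y^*_n] \leq \E[Y_0]$ gives $\E[Y_\stime] \leq \E[Y_0]$. The submartingale case is identical after flipping inequalities (or applied to $-Y_n$).

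The main obstacle is the second step: finding a single $\probm$-integrable dominating variable that works uniformly in $n$. The polynomial step bound (C4')(ii) is just strong enough when paired with the exponential tail (C4')(i), because the sum $\sum_{k=0}^{\stime-1} k^d$ is controlled by $\stime^{d+1}$, and exponential tails kill all polynomial moments of $\stime$. Verifying rigorously that $\E[\stime^{d+1}]<\infty$ from the ``for all sufficiently large $n$'' form of the tail bound, and checking that the telescoping-plus-domination argument truly produces a variable that majorizes $|Y^*_n|$ uniformly in $n$, is the only technically delicate part; everything else is a routine application of the classical stopped-martingale machinery.
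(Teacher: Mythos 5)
The paper itself contains no proof of this statement: it is quoted directly from the cited reference~\cite{Wang0GCQS19} and used as a black box, so there is no in-paper argument to compare against. Your proof is correct, and it is essentially the standard (and the cited reference's) route: stop the process at \( \min\{n,\stime\} \), observe that the stopped process is again a super-/submartingale so that \( \E[Y_{\min\{n,\stime\}}] \leq \E[Y_0] \) (resp.\ \( \geq \)), dominate it uniformly in \( n \) by the single integrable variable \( Z = |Y_0| + M\stime^{d+1} \), and pass to the limit via dominated convergence using \( \stime < \infty \) a.s. The integrability of \( Z \) is exactly where (C4')(i) enters: the exponential tail yields \( \E[\stime^p] < \infty \) for every \( p \), and the ``sufficiently large \( n \)'' form of the bound is harmless, as you suspected — monotonicity of tails gives \( \probm(\stime > x) \leq c_1 e^{c_2} e^{-c_2 x} \) for all real \( x \) past the threshold, and finitely many exceptional terms cannot spoil convergence of \( \sum_n \probm(\stime > n^{1/p}) \).

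Two remarks on the details you flagged. First, you correctly identified the one genuine wrinkle in the statement itself: (C4') alone does not imply \( \E[|Y_0|] < \infty \), and the advertised conclusion ``\( \E[|Y_i|] < \infty \) for every \( i \)'' is false without it — take \( Y_n \equiv Y_0 \) non-integrable and \( \stime \equiv 1 \); both (i) and (ii) hold trivially, yet \( \E[Y_\stime] \) is undefined. So integrability (or determinism) of \( Y_0 \) is an implicit hypothesis; it is automatic in the paper's only application, where \( Y_0 = \eta(Z_0) + f(\vecinit^\out) \) is a constant because the initial program state is fixed. Your telescoping bound then gives both \( \E[|Y_i|] < \infty \) for all \( i \) and the uniform majorization \( |Y_{\min\{n,\stime\}}| \leq |Y_0| + M\stime^{d+1} \). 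Second, two cosmetic normalizations are needed for that majorization to be literally valid: one may assume \( d \geq 0 \) without loss of generality (enlarging \( d \) only weakens (ii)), and the statement tacitly intends \( c_2 > 0 \), since otherwise (i) is vacuous and the theorem fails. With these understood, your argument is complete.
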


\section{Proof of Theorem~\ref{thm:ulesm-exp-soundness}}
\label{sec:soundproof}
\soundness*

\begin{proof}
We present the proof for \( L_f \), the proof for \( U_f \) is analogous. 

Recall that \( Z_n \) denotes the \( n \)-th state along a run of \( \pCFG \) and \( \veca{X}_n \) denotes the \( n \)-th valuation encountered along a run. Let us define a stochastic process \( Y = (Y_n)_{n=0}^{\infty} \) by putting \( Y_n := L_f(Z_n) + f(\veca{X}_n^\out)\). The process \( Y \) is clearly adapted to the canonical filtration \( (\sigmaAlg_{n})_{n=0}^\infty \). 

First, note that \( \E[Y_{i+1} \mid \sigmaAlg_{i}] \) exists and for any run \( \rho \) is defined as follows: let \( \tau=(\loc,\prob) \) be the unique transition enabled in \( Z_i(\rho) \). Then
\begin{equation}
\label{eq:preexp}
\E[Y_{i+1} \mid \sigmaAlg_{i}] (\rho) = \sum_{\ell'\in L}\prob(\ell')\cdot \E[L_f(\ell', \mathbf{N}) + f(\mathbf{N}^\out)) ],
\end{equation}
where \( \mathbf{N} = \nextv(\tau, \veca{X}_i(\rho)) \). This function is well-defined, since the expectation on the right-hand side of \eqref{eq:preexp} always exists by the OST-soundness assumption. In Section~\ref{app:cond-exp}, we prove that the function defined in this way indeed satisfies the definition of conditional expectation.

Next, we prove that the process \( Y \) is a submartingale. We can continue from~\eqref{eq:preexp} as follows:

\begin{align*}
\E[Y_{i+1} \mid \sigmaAlg_{i}] (\rho) &= \sum_{\ell'\in L}\prob(\ell')\cdot \E[L_f(\ell', \mathbf{N}) + f(\mathbf{N}^\out)) ] &&\text{(by~\eqref{eq:preexp})} \\
&\geq  L_f(Z_i(\rho)) + f(Z_i(\rho))&&\text{(by~\eqref{eq:lfexp})} \\
&=Y_i(\rho) && \text{(by the def. of \( Y_i \)),} 
\end{align*}
as required.

In what follows, we abbreviate \( \TimeTerm \) by \( T \).
Since \( (\pCFG, L_f, f) \) is OST-sound, the submartingale \( Y \) satisfies the assumptions of either the optional stopping theorem or its extended variant. It follows that
\begin{align}
\label{eq:lfone}
L_f(\locinit, \vecinit) + f(\vecinit^\out) = \E[Y_0] \leq \E[Y_{T}] = \E[f(\veca{X}_T^\out)] =  \E_{\veca{x} \sim \mu^{\pCFG}}[f(\veca{x}^\out)],
\end{align}
where the first equality follows from the definition of \( Y \), the second from the (extended) optional stopping theorem, the third from the fact that \( L_f \) is zero upon termination, and last one from the definition of \( \mu^\pCFG \).
\end{proof}

%

\section{Conditional Expectation for U/LESMs}
\label{app:cond-exp}

We argue that 

\begin{equation}
\label{eq:supp-1}
\E[Y_{i+1} \mid \sigmaAlg_{i}] (\rho) = \sum_{\ell'\in L}\prob(\ell')\cdot \E[L_f(\ell', \mathbf{N}) + f(\mathbf{N}^\out)) ],
\end{equation}
where \( \tau \) is the unique transition enabled in state \( Z_i(\rho) \).

For each \( i \geq 0 \) we write \( Y_i = g(Z_i) \) for a Borel-measurable function \( g \). 
Note that the right-hand side of~\eqref{eq:supp-1} can be written in measure-theoretical terms as
\begin{equation*}
\int g(s)dP_{Z_i(\rho)}(s),
\end{equation*}
where \( P_x \) is the probability measure on states of the program defined by the transition kernel of the process represented by our program in the source state \( x \).

Hence, our aim is to prove that for any \( \sigmaAlg_{i} \)-measureable set \( A \) it holds
\begin{equation}
\label{eq:supp-2}
\int_A g(Z_{i+1}(\rho))d\probm(\rho) = \int_A\left[\int g(s)dP_{Z_i(\rho)}(s) \right] d\probm(\rho),
\end{equation}
where \( \probm \) is the probability measure over the runs of the program. 

We first prove the equality for the case when \( A \) is an \( \sigmaAlg_{i} \)-\emph{cylinder}, i.e. a set of the form \( A = \{\rho \mid Z_1(\rho) \in S_1,\ldots, Z_i(\rho) \in S_i\} \) for some Borel-measurable sets of program states \( S_1,\ldots,S_i \). In such a case, the right-hand side in~\eqref{eq:supp-2} can be rewritten as 
\begin{equation*}
\int_{S_1}\cdots \int_{S_i} g(s_{i+1})dP_{s_i}{(s_{i+1})}\cdots dP_{s_0}(s_1), 
\end{equation*}
which equals the left-hand side of~\eqref{eq:supp-2} directly by the cylinder construction of the probability measure \( \probm \). 

Now to prove that~\eqref{eq:supp-2} holds for any \( \sigmaAlg_{i} \)-measurable set \( A \), assume first that \( g \) is non-negative. By our OST assumption, both integrals in~\eqref{eq:supp-2} are finite. Moreover, the integrals are sigma-additive and an integral over an empty set is zero. Hence, both integrals define a finite measure over \( \sigmaAlg_{i} \) (the measure of set \( A \) being the value of the respective integral when integrating over \( A \)). As shown in the previous paragraph, these two measures agree on the generators of \( \sigmaAlg_{i} \) (the \( \sigmaAlg_{i} \)-cylinders), and the set of these generators is a \( \pi \)-system (is closed under finite intersections). Hence, the two measures are equal on whole \( \sigmaAlg_{i} \) \cite[Lemma 1.6]{Williams91}. Hence, the integrals are the same for all \( A \in \sigmaAlg_{i} \).

For general \( g \), we use the standard trick of splitting \( g \) into the non-negative and negative part: \( g = g^+ - g^{-} \), where both \( g^+ \) and \( g^- \) are non-negative, and hence integrate, on both sides of~\eqref{eq:supp-2}, to the same value as shown in the previous paragraph (and this value is finite by the integrability entailed by OST-soundness), irrespective of the choice of \( A \). The equality of both integrals for \( g \) then follows from the linearity of integrals.

\section{Algorithm for the Similarity Refunation Problem}\label{app:similarityalgo}

We now show how our algorithm can be extended for the Similarity refutation problem.

\paragraph{Additional algorithm parameters.} Recall from Section~\ref{sec:problem} that the Similarity refutation problem is also defined with respect to a metric $d$ over the output space $\mathbb{R}^{|\Vout|}$ and a lower bound $\epsilon > 0$ on the Kantorovich distance that we wish to prove. Our algorithm inputs \( d \) and \( \epsilon \) as parameters. We allow any of the following standard metrics:
\begin{itemize}
	\item {\em $L^p$-metric.} We allow the $L^p$-metric for any natural number $p \in \mathbb{N}$. This encapsulates the standard $L^1$-metric (i.e.~Manhattan metric) and $L^2$-metric (i.e.~Euclidean metric). Given $p \in \mathbb{N}$, the $L^p$-metric $d_p:\mathbb{R}^{|\Vout|} \times \mathbb{R}^{|\Vout|} \rightarrow \mathbb{R}$ is defined via $d_p(x,y) = (\sum_{i=1}^{|\Vout|}(|x[i] - y[i]|)^p)^{1/p}$.
	\item {\em Discrete metric.} We also allow $d$ to be the discrete metric, giving rise to Total Variation distance between output distributions (see Section~\ref{sec:distances}). Recall, the discrete metric $d_0:\mathbb{R}^{|\Vout|} \times \mathbb{R}^{|\Vout|} \rightarrow \mathbb{R}$ is defined via $d_0(x,y) = 0$ if $x=y$ and $d_0(x,y) = 1$ if $x \neq y$.
	\item {\em Uniform metric.} Finally, we allow the $L^{\infty}$-metric (i.e.~uniform metric). The uniform metric $d_\infty:\mathbb{R}^{|\Vout|} \times \mathbb{R}^{|\Vout|} \rightarrow \mathbb{R}$ is defined via $d_\infty(x,y) = \max_{1 \leq i \leq n}|x[y] - y[i]|$.
\end{itemize} 

\paragraph{Algorithm.} Recall from Theorem~\ref{thm:ulesm-refute} that $f$, $U^1_f$ and $L^2_f$ also yield a lower bound on Kantorovich distance between two output distributions, if we in addition constrain $f$ to be {\em $1$-Lipschitz continuous} over reachable output sets $I^1(\locterm^1)$ and $I^2(\locterm^2)$ of the two pCFGs. Hence, our algorithm for the Similarity refutation problem proceeds analogously as in Section~\ref{sec:algoequiv}, with the only difference being that it collects two additional constraints in Step~2 of the algorithm:
\begin{enumerate}
	\item {\em Lower bound on Kantorovich distance.} The algorithm collects the similarity refutation constraint, according to Theorem~\ref{thm:ulesm-refute}:
	\[ L_f(\locinit^2, \vecinit^2) + f((\vecinit^2)^\out - U_f(\locinit^1, \vecinit^1) - f((\vecinit^1)^\out) \geq \epsilon \]
	
	\item {\em $1$-Lipschitz continuity.} Depending on the choice of the metric $d$ over the output space, the algorithm impose the $1$-Lipschitz continuity constraint as follows:
	\begin{itemize}
		\item {\bf\em $1$-Lipschitz continuity w.r.t.~$d_p$.}  Suppose that metric $d$ is the $L^p$-metric $d_p$ for some $p\in\mathbb{N}$. We enforce the following constraint for each pCFG $\pCFG_i$, $i\in\{1,2\}$:
		\begin{equation*}
			\begin{split}
				\forall \mathbf{x}, \mathbf{y}, \mathbf{a} \in \mathbb{R}^{|\Vout|}.\, &\mathbf{x},\mathbf{y} \models I^i(\locterm^i) \land \bigwedge_{j=1}^{|\Vout|} \Big( \mathbf{x}[j] - \mathbf{y}[j] \leq \mathbf{a}[j] \land \mathbf{y}[j] - \mathbf{x}[j] \leq \mathbf{a}[j] \Big) \\
				&\Longrightarrow (f(\mathbf{x}) - f(\mathbf{y}))^p \leq \sum_{j=1}^{|\Vout|} \mathbf{a}[j]^p.
			\end{split}
		\end{equation*}
		The inequality on the right-hand-side is imposed for all $\mathbf{a}[j] \geq |\mathbf{x}[j] - \mathbf{y}[j]|$, which is equivalent to simply imposing it for $\mathbf{a}[j] = |\mathbf{x}[j] - \mathbf{y}[j]|$, giving rise to a sound and complete encoding of the $1$-Lipschitz continuity.
		
		\item {\bf\em $1$-Lipschitz continuity w.r.t.~$d_0$.} The algorithm collects the following constraints on $f$ to be $1$-Lipschitz continuous over outputs of the pCFGs. We enforce the following constraint for each pCFG $\pCFG_i$, $i\in\{1,2\}$:
		\begin{equation*}
			\forall \mathbf{x}, \mathbf{y} \in \mathbb{R}^{|\Vout|}.\, \mathbf{x},\mathbf{y} \models I^i(\locterm^i) \Rightarrow f(\mathbf{x}) - f(\mathbf{y}) \leq 1.
		\end{equation*}
		Indeed, a function $f$ is $1$-Lipschitz continuous if for any distinct $\mathbf{x}, \mathbf{y} \in \mathbb{R}^{|\Vout|}$ the difference in the values of $f$ is at most $1$, since $d_0(\mathbf{x}, \mathbf{y}) = 1$ for $\mathbf{x} \neq \mathbf{y}$. This gives rise to a sound and complete encoding of the $1$-Lipschitz continuity with respect to the discrete metric.
		
		\item {\bf\em $1$-Lipschitz continuity w.r.t.~$d_\infty$.} The algorithm collects the following constraints on $f$ to be $1$-Lipschitz continuous over outputs of the pCFGs. We enforce the following constraint for each pCFG $\pCFG_i$, $i\in\{1,2\}$:
		\begin{equation*}
			\begin{split}
				\forall \mathbf{x}, \mathbf{y}, \mathbf{a} \in \mathbb{R}^{|\Vout|}.\, \forall A \in \mathbb{R}.\, &\mathbf{x},\mathbf{y} \models I^i(\locterm^i) \land \bigwedge_{j=1}^{|\Vout|} \Big( \mathbf{x}[j] - \mathbf{y}[j] \leq \mathbf{a}[j] \land \mathbf{y}[j] - \mathbf{x}[j] \leq \mathbf{a}[j] \Big) \\
				&\bigwedge_{j=1}^{|\Vout|}\Big( \mathbf{a}[j] \leq A \Big) \Longrightarrow f(\mathbf{x}) - f(\mathbf{y}) \leq A.
			\end{split}
		\end{equation*}
		The above constraint is a sound and complete encoding of the fact that $f$ is $1$-Lipschitz continuous over $I^i(\locterm^i)$ for each $i \in \{1,2\}$.
		On the left-hand-side of the entailment, we use the component $\mathbf{a}[j]$ for each $1 \leq j \leq |\Vout|$ to bound from above the absolute difference $|\mathbf{x}[j] - \mathbf{y}[j]|$. Moreover, we use $A$ to bound from above the maximum absolute difference. Thus, the inequality on the right-hand-side is imposed for all $A \geq \mathbf{a}[j] \geq |\mathbf{x}[j] - \mathbf{y}[j]|$, which is equivalent to simply imposing it for $A = \mathbf{a}[j] = |\mathbf{x}[j] - \mathbf{y}[j]|$, giving rise to a sound and complete encoding of the $1$-Lipschitz continuity with respect to the uniform metric.
	\end{itemize}
\end{enumerate}

As in Section~\ref{sec:algoequiv}, the lower-bound constraint is a linear and purely existentially quantified constraint over the symbolic template variables since $\vecinit^1$ and $\vecinit^2$ are fixed. On the other hand, the $1$-Lipschitz continuity constraints are of the same form as in eq.~\eqref{eq:handelmaneq}. Hence, we may proceed analogously as in Steps~3 and~4 in Section~\ref{sec:algoequiv} to translate the collected constraints into an LP instance and reduce the synthesis to LP solving. The algorithm returns ``Not $\epsilon$-output close'' and outputs the computed $f$, $U^1_f$ and $L^2_f$ if the LP is successfully solved, or returns ``Unknown'' otherwise.

The proof of the following theorem can be found in Section~\ref{app:algoproofs} in the supplementary material.

\begin{restatable}[Correctness of Similarity Refutation]{theorem}{algotwosoundness}\label{thm:algotwosoundness}
	Suppose that the algorithm outputs ``Not $\epsilon$-output close''. Then $\pCFG_1$ and $\pCFG_2$ are indeed not $\delta$- output close, and $U^1_f$ and $L^2_f$ are valid UESM and LESM for $f$, respectively.
\end{restatable}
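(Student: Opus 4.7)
The plan is to piggyback on the correctness proof of Theorem~\ref{thm:algosoundness} for equivalence refutation, and then argue that the two additional constraints collected in Step~2 of the similarity algorithm (the lower-bound constraint on the Kantorovich distance and the $1$-Lipschitz continuity constraint on $f$) correctly enforce the hypotheses of the similarity part of Theorem~\ref{thm:ulesm-refute}. First I would recall that, by the same reasoning as in Theorem~\ref{thm:algosoundness}, any solution to the LP instance produced by the algorithm yields concrete polynomials $f$, $U^1_f$, $L^2_f$ satisfying all constraints collected in Step~2, because the translation via Handelman's theorem used in Step~3 is sound: every existentially quantified linear LP solution corresponds to valid coefficients witnessing the original universally quantified polynomial entailments. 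In particular, $U^1_f$ is a valid UESM for $f$ in $\pCFG_1$ w.r.t. the supporting invariant $I_1$, $L^2_f$ is a valid LESM for $f$ in $\pCFG_2$ w.r.t. $I_2$, and the OST-soundness conditions are met according to the case analysis on the chosen OST-condition (C1)--(C4), as already argued for equivalence refutation.

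Next I would show that $f$ is indeed $1$-Lipschitz continuous with respect to the chosen metric $d$ on the reachable output sets $I^1(\locterm^1)$ and $I^2(\locterm^2)$. This requires a short case distinction matching the three metric options supported by the algorithm. For $d = d_p$, the encoding introduces auxiliary variables $\mathbf{a}[j]$ that upper-bound the component-wise absolute differences $|\mathbf{x}[j]-\mathbf{y}[j]|$, and the collected polynomial entailment forces $(f(\mathbf{x})-f(\mathbf{y}))^p \leq \sum_j \mathbf{a}[j]^p$; taking the tightest choice $\mathbf{a}[j] = |\mathbf{x}[j]-\mathbf{y}[j]|$ and applying the same argument to $\mathbf{y}-\mathbf{x}$ yields $|f(\mathbf{x})-f(\mathbf{y})| \leq d_p(\mathbf{x},\mathbf{y})$, as desired. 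The arguments for $d_0$ and $d_\infty$ are analogous and simpler: for the discrete metric it suffices to bound $f(\mathbf{x})-f(\mathbf{y})$ by $1$ for all pairs of distinct reachable outputs (swapping roles to obtain both signs), and for the uniform metric one invokes the auxiliary scalar $A$ that upper-bounds $\max_j |\mathbf{x}[j]-\mathbf{y}[j]|$.

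Finally, the lower-bound constraint collected in Step~2 directly imposes
\[
L^2_f(\locinit^2,\vecinit^2) + f((\vecinit^2)^{\out}) - U^1_f(\locinit^1,\vecinit^1) - f((\vecinit^1)^{\out}) \geq \epsilon .
\]
In particular, this implies the strict inequality assumed in Theorem~\ref{thm:ulesm-refute}, so by that theorem the Kantorovich distance satisfies $\K_d(\mu^{\pCFG_1},\mu^{\pCFG_2}) \geq \epsilon$. Hence $\pCFG_1$ and $\pCFG_2$ are not $\epsilon$-output close, as required.

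The main obstacle I anticipate is a careful verification of the $1$-Lipschitz encodings, especially for $d_p$ with general $p$ and for $d_\infty$, since these require arguing that the inequality $(f(\mathbf{x})-f(\mathbf{y}))^p \leq \sum_j \mathbf{a}[j]^p$ (resp.\ the bound $A$) is \emph{equivalent} to the intended Lipschitz condition, rather than merely sound in one direction. This amounts to checking that the encoded entailment, when instantiated with the tightest values of the auxiliary variables, yields exactly the definition of $1$-Lipschitz continuity w.r.t.\ $d$; everything else reduces to routine reuse of the equivalence-refutation correctness argument.
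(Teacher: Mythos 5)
Your proposal is correct and follows essentially the same route as the paper's proof: argue that an LP solution yields a solution of the Step~2 constraints via the soundness of the Handelman-based translation, conclude that $U^1_f$ and $L^2_f$ are a valid UESM/LESM pair satisfying OST-soundness, and then invoke Theorem~\ref{thm:ulesm-refute} together with the collected similarity-refutation constraint (noting $\epsilon>0$ gives the required strict inequality). The only difference is that you explicitly re-verify the $1$-Lipschitz encodings for each metric inside the proof, whereas the paper treats that verification as part of the algorithm's description in the appendix; this is a presentational rather than a substantive difference.
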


\paragraph{Optimization of the Kantorovich distance.} Finally, we note that our algorithm for the Similarity refutation problem reduces the synthesis of $f$, $U^1_f$ and $L^2_f$ to an LP instance without the optimization objective. Thus, our method can also {\em optimize} the lower bound on the Kantorovich distance by treating \( \epsilon \) as a variable and adding the optimization objective to maximize $\epsilon$.

\section{Soundness Proofs for Algorithms}\label{app:algoproofs}

\algosoundness*

\begin{proof}
	Suppose that the algorithm outputs ``Not output-equivalent'' and that it computes a function $f$ over outputs, a state function $U^1_f$ in $\pCFG_1$ and a state function $L^2_f$ in $\pCFG_2$. We show that $U^1_f$ is an UESM for $f$ in $\pCFG_1$ and that $L^2_f$ is an LESM for $f$ in $\pCFG_2$ which together prove that $\pCFG_1$ and $\pCFG_2$ are not output-equivalent.
	
	Since the algorithm outputs ``Not output-equivalent'', we must have that $f$, $U^1_f$ and $L^2_f$ computed by the algorithm provide a part of the solution to the system of constraints in Step~3. Thus, it follows by  the correctness of reduction to an LP instance that was established in~\cite{ChatterjeeFG16,AsadiC0GM21,Wang0GCQS19,ZikelicCBR22} that they also provide a solution to the system of constraints collected by the algorithm in Step~2. But the constraints collected in Step~2 impose the defining conditions of UESMs as in Definition~\ref{def:uesm}, the defining conditions of LESMs as in Definition~\ref{def:lesm} and OST-soundness conditions as in Definition~\ref{def:ostsound}. Note that the absolute value of the sum of the U/LESM and \( f \) is finite as required by OST soundness, since the U/LESMs and \( f \) are defined via polynomial expressions and all program variables have all moments finite at each step of the program execution. (The latter property follows by a straightforward induction using the fact that the programs use polynomial updates and only sample from distributions that have all moments finite.)
		
	Hence, any solution to the system of constraints collected in Step~2 of the algorithm gives rise to $U^1_f$ and $L^2_f$ that are valid UESM and LESM for $f$, as wanted. Furthermore, by the equivalence refutation constraint that is also collected in Step~2 and by Theorem~\ref{thm:ulesm-refute}, it follows that whenever a solution to the system of constraints in Step~2 exists, the two pCFGs are not output-equivalent. This concludes the proof.
\end{proof}

\algotwosoundness*

\begin{proof}
	Suppose that the algorithm outputs ``Not $\epsilon$-output close'' and that it computes a function $f$ over outputs, a state function $U^1_f$ in $\pCFG_1$ and a state function $L^2_f$ in $\pCFG_2$. We show that $U^1_f$ is an UESM for $f$ in $\pCFG_1$ and that $L^2_f$ is an LESM for $f$ in $\pCFG_2$ which together prove that $\pCFG_1$ and $\pCFG_2$ are not $\epsilon$-output close.
	
	Since the algorithm outputs ``Not $\epsilon$-output close'', we must have that $f$, $U^1_f$ and $L^2_f$ computed by the algorithm provide a part of the solution to the system of constraints in Step~3. Thus, it follows by  the correctness of reduction to an LP instance that was established in~\cite{ChatterjeeFG16,AsadiC0GM21,Wang0GCQS19,ZikelicCBR22} that they also provide a solution to the system of constraints collected by the algorithm in Step~2. But the constraints collected in Step~2 impose the defining conditions of UESMs as in Definition~\ref{def:uesm}, the defining conditions of LESMs as in Definition~\ref{def:lesm} and OST-soundness conditions as in Definition~\ref{def:ostsound}. Hence, any solution to the system of constraints collected in Step~2 of the algorithm gives rise to $U^1_f$ and $L^2_f$ that are valid UESM and LESM for $f$, as wanted. Furthermore, by the similarity refutation constraint that is also collected in Step~2 and by Theorem~\ref{thm:ulesm-refute}, it follows that whenever a solution to the system of constraints in Step~2 exists, the two pCFGs are not $\epsilon$-output close. This concludes the proof.
\end{proof}

\section{Experimental Comparison of OST Conditions}
\label{sec:experiment-comparison-C1-C2-C3}
\begin{table}[h]
	\centering
	\texttt{
		\renewcommand{\arraystretch}{1.2}
	\resizebox{\textwidth}{!}{
		\begin{tabular}{|p{7mm}|c|c c c c|c c c c|c c c c|}
			\hline
			&
			\multirow{2}{*}{Name} & \multicolumn{4}{|c|}{C1/C4} & \multicolumn{4}{|c|}{C2} & \multicolumn{4}{|c|}{C3} \\
			& & Eq. Ref. & T.(s) & Dis. & T.(s) & Eq. Ref & T.(s) & Dis. & T.(s) & Eq. Ref & T.(s) & Dis. & T.(s)\\
			\cline{1-14}
			\parbox[t]{2mm}{\multirow{9}{*}{\rotatebox[origin=c]{90}{\pbox{4cm}{\hfil Benchmarks From \\ \cite{Wang0GCQS19}}}}}
			& {Simple Example} & \checkmark &  0.30 &  266.667 &  0.25 & \checkmark & 1.75 & 0.583 & 32.18 & \checkmark & 0.46 & 266.667 & 0.48 \\
			\cline{2-14}
			& {Nested Loop} & \checkmark &  0.31 &  50.0 &  0.33 & TO & - & TO & - & TO & - & TO & - \\
			\cline{2-14}
			& {Random Walk} & \checkmark &  0.23 &  9.0 &  0.22 & TO & - & TO & - & \checkmark & 1.13 & 11.25 & 0.59 \\
			\cline{2-14}
			& {Goods Discount} & \checkmark &  0.35 &  0.008 &  0.56 & \checkmark & 0.52 & 0.008 & 0.91 & \checkmark & 1.51 & 0.008 & 0.82 \\
			\cline{2-14}
			& {Pollutant Disposal} & \checkmark &  0.44 &  0.026 &  0.46 & TO & - & TO & - & TO & - & TO & - \\
			\cline{2-14}
			& {2D Robot} & \checkmark &  13.90 &  - &  - & TO & - & TO & - & TO & - & TO & - \\
			\cline{2-14}
			& {Bitcoin Mining} & \checkmark &  0.25 &  0.05 &  0.22 &  \checkmark & 0.35 & 0.05 & 0.24 & \checkmark & 0.36 & 0.05 & 0.3 \\
			\cline{2-14}
			& {Bitcoin Mining Pool} & \checkmark &  129.98 &  122761.25 &  131.06 & TO & - & TO & - & \checkmark & 163.00 & TO & - \\
			\cline{2-14}
			& {Species Fight} & \checkmark &  0.90 &  - &  - & TO & - & TO & - & TO & - & TO & - \\
			\hline
			\hline
			\parbox[t]{2mm}{\multirow{7}{*}{\rotatebox[origin=c]{90}{\pbox{4cm}{\hfil Benchmarks From \\ \cite{DBLP:conf/tacas/KuraUH19}}}}}
			&  {coupon\_collector} & \checkmark &  1.10 &  0.5 &  1.41 & TO & - & TO & - & \checkmark & 1.69 & 0.5 & 1.80 \\
			\cline{2-14}
			&  {coupon\_collector4} & \checkmark &  70.96 &  - &  - & TO & - & TO & - & \checkmark & 210.63 & TO & - \\
			\cline{2-14}
			& {random\_walk\_1d\_intvalued} & \checkmark &  0.32 &  1.2 &  0.38 & TO & - & TO & - & \checkmark & 0.98 & 2.4 & 1.85 \\
			\cline{2-14}
			& {random\_walk\_1d realvalued} & \checkmark &  0.27 &  3.841 &  0.50 & TO & - & TO & - & TO & - & TO & - \\
			\cline{2-14}
			& {random\_walk\_1d\_adversary} & \checkmark &  0.38 &  0.768 &  0.40 & TO & - & TO & - & TO & - & TO & - \\
			\cline{2-14}
			& {random\_walk\_2d\_demonic} & \checkmark &  0.58 &  0.668 &  0.58 & TO & - & TO & - & TO & - & TO & - \\
			\cline{2-14}
			& {random\_walk\_2d\_variant} & \checkmark &  0.75 &  0.501 &  0.76 & TO & - & TO & - & TO & - & TO & - \\
			\hline
	\end{tabular}
\renewcommand{\arraystretch}{1}
	}
}
\caption{Comparison of Different OST conditions applied to the first benchmark set}
\label{tab:experiments-C2-C3}
\end{table}

Table \ref{tab:experiments-C2-C3} shows a comparison between performance of different OST conditions when applied for refuting equivalence/similarity of our benchmarks. As expected, (C2) and (C3) are more restrictive, therefore fewer benchmarks could be refuted by them. Moreover, even when with (C2) or (C3) our tool successfully refutes equivalence, they take more time than (C1)/(C4). This shows that although (C2) and (C3) can be applied to a wider range of programs, whenever (C1) or (C4) are applicable, it is more efficient to use the latter.

Note that all of our benchmarks satisfy the assumptions of (C4) while some also satisfy (C1). Moreover, both (C1) and (C4) do not impose any constraints on the generated ESMs, therefore they are presented in the same column in table \ref{tab:experiments-C2-C3}.

\end{document}